\numberwithin{equation}{section}
\newcommand{\mat}[1]{{#1}} 
\newcommand{\pmat}[1]{{\mathcal{#1}}}
\newcommand{\companion}{{\phi}}
\newcommand{\DCons}{\Gamma}
\newcommand{\N}{\mathcal{N}}
\newcommand{\A}{{\pmat{A}}}
\newcommand{\B}{{\pmat{B}}}
\renewcommand{\b}{{{\pmat{b}}}}
\newcommand{\bb}{{{\pmat{b}}}}
\newcommand{\blin}{{\widehat{\mathcal{b}}}}
\newcommand{\Alin}{{\widehat{\mathcal{A}}}}
\newcommand{\DeltaAlin}{{\widehat{\Delta\mathcal{A}}}}
\newcommand{\Blin}{{\widehat{\mathcal{B}}}}
\newcommand{\Span}{\mbox{\upshape{span}}}
\DeclareMathOperator{\rank}{rank}
\DeclareMathOperator{\diag}{diag}
\renewcommand{\vec}{{\mbox{\upshape vec}}}
\newcommand{\RR}{{\mathbb{R}}}
\newcommand{\CC}{{\mathbb{C}}}
\newcommand{\QQ}{{\mathbb{Q}}}
\newcommand{\FF}{{\mathbb{F}}}
\newcommand{\D}{\partial}
\newcommand{\norm}[1]{{\|#1\|}}
\newcommand{\tallnorm}[1]{ {\left \|#1 \right \|}}
\newcommand{\nxn}{{n\times n}}
\definecolor{darkgreen}{rgb}{0,.35,0}
\definecolor{darkblue}{rgb}{0,0,.5}
\definecolor{darkred}{rgb}{.6,0,0}
\newcommand\scalemath[2]{\scalebox{#1}{\mbox{\ensuremath{\displaystyle #2}}}}
\newtheorem{theorem}{Theorem}[section]
\newtheorem{corollary}[theorem]{Corollary}
\newtheorem{lemma}[theorem]{Lemma}
\newtheorem{definition}[theorem]{Definition}
\newtheorem{fact}[theorem]{Fact}
\newtheorem{example}[theorem]{Example}
\newtheorem{problem}[theorem]{Problem}
\newtheorem*{problem*}{Problem}
\newtheorem{conjecture*}{Conjecture}
\newtheorem*{remark*}{Remark}
\date{\today}
\begin{document}

%

\begin{frontmatter}

\title{Computing Lower Rank Approximations of Matrix Polynomials}

\author[uw]{Mark Giesbrecht}
  \ead{mwg@uwaterloo.ca}
\author[uw]{Joseph Haraldson}
  \ead{jharalds@uwaterloo.ca}
    \author[uw]{George Labahn}
  \ead{glabahn@uwaterloo.ca}
\tnotetext[uw]{Cheriton School of Computer Science,
    University of Waterloo, Waterloo, Ontario, Canada}


\begin{abstract}
Given an input matrix polynomial whose coefficients are floating point numbers, we consider the 
problem of finding the nearest matrix polynomial which has rank at most a 
specified value. This generalizes the problem of finding a nearest 
matrix polynomial that is algebraically singular with a prescribed lower bound 
on the dimension given in a previous paper by the authors. In this paper we prove that such 
lower rank matrices at minimal distance always exist, 
satisfy regularity conditions, and  are all isolated and surrounded by a basin of 
attraction of non-minimal solutions.  In addition, we present an
iterative algorithm which, on given input sufficiently close to a rank-at-most 
matrix, produces that matrix.  The algorithm is efficient and is proven to 
converge quadratically given a sufficiently good starting point.  An 
implementation demonstrates
the effectiveness and numerical robustness of our algorithm in practice. 
\end{abstract}
\end{frontmatter}

\section{Introduction}

Matrix polynomials appear in many areas of computational algebra, control 
systems theory, differential equations, and mechanics. The algebra of matrix 
polynomials is typically described assuming that the individual polynomial 
coefficients come from an exact arithmetic domain. However, in the case of 
applications these coefficients typically have numeric coefficients, usually 
real or complex numbers. As such, arithmetic can have numerical errors and 
algorithms are prone to numerical instability.  

Numerical errors have an impact, for example, in determining the rank of a matrix polynomial
with floating point coefficients. In 
an exact setting determining the rank or determinant of a matrix polynomial is 
straightforward, and 
efficient procedures are available, for example from \cite{StoVil05}. However, 
in a numeric environment,  a matrix polynomial may appear to have full or high 
rank while at the same 
time being  close to one having lower rank.  Here ``close'' is defined 
naturally 
under the Frobenius norm on the underlying coefficient matrices of the matrix  
polynomial. Rather than 
computing the rank of the given matrix polynomial exactly, one can ask how far 
away it is from one that is rank-deficient, and then  to find one at that 
distance. In the case of 
matrices with constant entries this  is a problem solved via the Singular Value 
Decomposition (SVD).  However, in the case of matrix polynomials no equivalent 
rank revealing factorization has thus far been available.

In this paper we consider the problem of computing the nearest matrix polynomial 
to an input matrix polynomial in $\RR[t]^{m\times n}$ having rank at most a 
specified value $r$.  More precisely, given an integer $r$ and an  $\pmat{A}\in 
\RR[t]^{m\times n}$ of full rank, we want to compute $\Delta \pmat{A}\in 
\RR[t]^{m\times 
n}$ with $\deg(\Delta\A_{ij})\leq \deg \A_{ij}$ (or similar degree constraints 
to be specified later), such that  $\pmat{A}+\Delta \pmat{A}$  has rank at most 
$n-r$ and where   $\norm{\Delta\pmat{A}}$ is minimized.   In the case where 
$n-r$ is one less than 
the row or column size then this is the problem of finding the nearest matrix 
polynomial which is \emph{singular}. 

A reasonable metric for measuring closeness on the space of matrix polynomials 
over the reals is the Frobenius norm.  For a
matrix polynomial
$\A\in\RR[t]^{m\times n}$, with $(i,j)$ entry $A_{ij}\in\RR[t]$, the
\emph{Frobenius} norm is given by 
\begin{equation}\label{frobNorm}
  \norm{\A}^2 = \norm{\A}^2_F = \sum_{1\leq i\leq m,1\leq j \leq n} 
\norm{A_{ij}}^2,
\end{equation}
where, for a polynomial $a\in\RR[t]$, the coefficient 2-norm is defined by
\begin{equation}\label{polyNorm}
  a=\sum_{0\leq i\leq \deg a} a_i t^i,
  ~~~~~~~~\norm{a}^2 = \norm{a}^2_2 = \sum_{0\leq i\leq \deg a} a_i^2.
\end{equation}

The main results in this paper center on the characterization of the geometry of 
minimal solutions.  We show that minimal solutions exist, that is, for a given 
$r$ there exists a 
$\Delta\A\in\RR[t]^{m\times n}$ of minimal norm such that $\A+\Delta\A$ has rank 
at most $n-r$
and meets the required degree  constraints on perturbed coefficients. In 
addition, we show that minimal solutions are isolated and are surrounded by a 
non-trivial open neighbourhood of non-minimal solutions. Also regularity and 
second-order 
sufficiency conditions are generically satisfied and a restricted version of  
the problem always satisfies these conditions. Finally we show that we can also 
generalize our results to the lower rank approximation instance of matrix 
polynomials generated by an affine  structure\footnote{A matrix  $A \in 
\FF^{m\times n}$ has an affine structure over a ring $\FF$ if it can be written 
as  $  A = B_0 + \sum_{i=1}^L c_i B_i$ for $\{B_0,B_1,\ldots, B_L \} \subseteq 
\FF^{m\times n}$ and $c_i \in \FF$. If $B_0$ is the zero matrix, then the 
structure is said to be linear. Examples of linear structures include symmetric 
and  hermitian matrices while   matrices with an affine structure include 
entries that are fixed non-zero coefficients, such as monic matrix 
polynomials.}, and so generalize to low-rank 
approximations of structured matrices by taking the degree to be zero. 

We demonstrate efficient algorithms for computing our minimal lower rank approximants. 
That is, 
 for an input matrix polynomial 
$\A\in\RR[t]^{m\times n}$ (with prescribed affine structure) sufficiently close 
to a singular matrix polynomial, we give an 
iterative scheme which converges to a rank at most matrix polynomial at minimal 
distance, at a provably quadratic rate of 
convergence.  We further generalize the iterative scheme so that it converges to 
a matrix polynomial with a kernel of dimension at least $r$, at a  minimal 
distance and a provable quadratic rate of convergence.  Finally, we also discuss 
a 
Maple implementation which demonstrates the convergence and numerical robustness 
of our iterative scheme.
  
\subsection{Previous research}

Much of the work in this area has often been done under the heading of 
\emph{matrix pencils}. 
See \cite{GolLanRod09} for an 
excellent overview.  Non-singular (full rank) square matrix polynomials
are sometimes referred to as \emph{regular matrix polynomials}.

In the case of finding  the nearest  singular matrix pencil this problem was 
solved by the present authors in \cite{GHL17}. Previous to that this problem was 
posed for linear matrix pencils in \cite{ByeNic93} and followed up in 
\cite{ByeHeMeh98}.  The nearest 
singular matrix polynomial relates to the stability of polynomial eigenvalue 
problems, linear time invariant systems and differential-algebraic equations 
studied subsequently in 
\citep{KreVoi15,GulLubMeh16}.  For non-linear matrix polynomials/pencils, 
previous works rely on embedding a non-linear
(degree greater than 1) matrix polynomial into a linear matrix polynomial of 
much higher order.  Theorem 1.1 and Section 7.2 of
\cite{GolLanRod09} shows that any regular $\A\in\RR[t]^\nxn$ of degree $d$, is 
equivalent to a linear matrix polynomial 
$\B=B_0+tB_1$, for $B_0,B_1\in\RR^{nd\times nd}$.   However, this equivalence is 
(obviously) not an isomorphism, nor is it distance preserving\footnote{The 
equivalence mapping is not surjective.}.  Hence a nearby singular matrix 
polynomial to 
$\B\in\RR[t]^{nd\times nd}$ (even when constrained to a degree one perturbation) 
almost certainly does not correspond to a
nearby singular matrix polynomial to $\A\in\RR[t]^\nxn$.  Moreover, even if one 
was to perturb to a
rank-reduced matrix within the image of the linearization, the inverse image 
would not necessarily have reduced rank.  
In \cite{LawCor15} a more sophisticated linearization with an eye towards 
ameliorating this is explored.

In the context of computer algebra the notion of symbolic-numeric
algorithms for polynomials has been an active area of research for a
number of years, and the general framework of finding nearby instances
with a desired algebraic property is being thoroughly explored.
Closest to our work here is work on approximate Greatest Common Divisors (GCD)
\cite{CGTW95,BecLab98,BecLab98b}, multivariate polynomial factorizations 
\cite{KalMayYangZhi08}, and especially the optimization-based approaches
employing the Structured Total Least Norm algorithm 
\cite{LiYan05,KalYan05,KalYan06,Zhi07} and Riemannian SVD
\cite{BotGie05}. 
More recently, we have explored computing the approximate
GCRD of (non-commutative) differential polynomials
\citep{GieHar15,GieHar16} and resolve similar issues.

The computer algebra community has made impressive progress on fast,
exact algorithms for matrix polynomials, including nearly optimal
algorithms for computing ranks, factorizations and various normal
forms; see \cite{KalSto15} and references therein for a recent
overview.  Part of our goal in this current paper is establish a basis
for extending the reach of these symbolic techniques to matrices of
polynomials with floating point coefficients.

In a more general setting our problem can be formulated as a
Structured Low Rank Approximation (SLRA) problem.  A popular method to
solve SLRA problems is the Structured Total Least Norm (STLN) approach
\citep{RosParGli96, RosGliPar98}.  These are iterative methods and in
general their convergence to stationary points is linear (first
order), rather than quadratic, unless additional assumptions are
made. In the event STLN converges to a solution, there may be other
solutions arbitrarily nearby, as second order sufficient conditions may not 
hold.
The SLRA problem is a non-linear least squares problem and accordingly other
techniques such as the Restricted and Riemannian SVD
\citep{DeM93,DeM94,DeM95} provide general tools for solving such
problems.
Other heuristic tools applicable to our problem include 
variable projection \citep{GolPer73,GolPer03} and Newton's method
\citep{AbaMenHar91}.  We would expect these methods to perform very poorly in
our case, as one can expect problems with large residuals to perform
poorly and the rational function arising from variable projection can
be too costly to deal with for modestly sized problems.  The problem
may also be considered as optimization on a manifold
\citep{AbsMahSep09}, however we do not explicitly consider this
approach. For a detailed survey of affinely structured low-rank
approximation, see \citep{Mar08,Mar11}.

Other methods for structured low-rank approximation involve the family
of lift and project algorithms, with the best known being Cadzow's
algorithm \citep{Cad88}.  More recently \cite{SchSpa16} gives a
sequence of alternating projections that provably converge
quadratically to a fixed point. However, lift and project algorithms
do not generally satisfy necessary first order (see \citep{Ber99}) optimality 
conditions,
and while they may converge (quickly) to a fixed point, there is no
guarantee that the fixed point is an optimal solution, though it is
usually quite good.  In any case, for specific problems such as ours,
understanding the geometry of the minimal solutions (and hence the
well-posedness of the problem) is key to effective algorithms for
their computation. 


SLRA problems are in general NP-hard to solve, see for example \citep{PolRoh93,BraYouDoyMor94}. 
They are also hard to  approximate under affinely structured matrices over 
$\QQ$. 
In general the hardness stems from determining 
if a bilinear system of equations admits a non-trivial solution. In the 
instance of classical matrix polynomials it is trivial to construct feasible 
points since the underlying scalar matrix problem is linearly structured. 

All of our contributions apply to matrix polynomials with an  affine structure 
provided that feasible points exist, 
that is, singular matrix polynomials with a prescribed structure exist, which is 
NP-hard in general. In particular, in 
the degree zero case our algorithms and techniques apply to affine SLRA 
problems. Thus, computing the nearest (affinely structured) 
matrix polynomial is equivalent to SLRA problems with an affine structure.   


While the  contributions in this paper focus on local properties of SLRA, the 
local properties also imply global results.  
The Sum of Squares (SOS) hierarchy is a global framework for studying polynomial 
optimization problems  subject to polynomial
constraints \cite{Las01}. The SOS optimization tools have found experimental 
success in computing structured distances to
singularity and extracting minimizers when the solutions are locally unique, see 
for example  \cite{HenLas06}. In general the SOS hierarchy
converges for an infinite order of relaxations, but for several problems the 
relaxations converge after a finite order. The finite convergence is in 
polynomial time with respect to the input and the number of relaxations.  In 
particular, this
finite convergence was observed for affine SLRA problems in \cite{HenLas06} but 
little theory was provided to indicate the
reason why. The later work of \cite{Nie14} shows that, under regularity and 
second-order sufficiency conditions, finite
convergence always occurs and that it is possible to extract a minimal solution. 
In our contributions we prove that 
second-order sufficiency and regularity conditions hold generically (and if they 
do not, then they will hold on a restricted
subset of the problem). The corollary to this is that the SOS hierarchy will 
have finite convergence for  affine SLRA problems if a solution exists (such as 
computing the distance of the nearest rank-deficient matrix polynomial) and if 
the embedding is 
minimal then a minimizer may be extracted as well. Another useful feature of 
the SOS hierarchy is even if convergence cannot be certified, a structured 
lower-bound is obtained.
%


\subsection{Outline}
%

In Sections \ref{sec:geometry} and \ref{sec:rankFact} we describe 
tools needed for our constructions and then explore the geometry of
our problem.  We show that the problem is locally well-posed. One
cannot expect the nearest rank at most matrix polynomial to be
unique. However under weak normalization assumptions, we show that
solutions are locally unique in a closed-ball around them.  To
complement the separation of solutions, we also show that for an equivalent 
problem, solutions
corresponding to a different closed ball are separated by at least a
constant amount independent of the dimension of the space.

In Section \ref{sec:algorithm} we give an equality constrained variant
of Newtons' method for computing via post-refinement the nearest
rank at most matrix polynomial.  The main idea is to
compute an initial guess with a suitable first order or lift-and
project method.  We are able to prove that, with a suitable initial
guess and regularity assumptions, our algorithm generally has local
quadratic convergence except for degenerate cases.  This is done by
deriving closed-form expressions for
the Jacobian of the constraints and the Hessian of the Lagrangian. When we 
refer to the 
speed of convergence, we refer to quotient rates as is typical in the 
nomenclature.
 
In Section \ref{sec:implementation} we describe our prototype
implementation, including heuristics for starting points and other
improvements. We discuss the numerical performance of the algorithm
and give examples demonstrating convergence.  
results for a low-rank approximation of matrix polynomials. 
The paper ends  with a conclusion and topics for future research.

%


\section{Preliminaries and Geometry}
\label{sec:geometry}


In this section we will introduce some basic definitions and explore
the numerical geometry of our lower rank problem.  Canonically we will let
\[
\A= \sum_{j=0}^d  A_j t^j \in \RR[t]^{n\times n}
\]
be a matrix polynomial, with coefficients $A_0,\ldots,A_d\in\RR^\nxn$.
In the case of rectangular matrix polynomials 
we are able to pad the matrix with zeros, thus embedding the problem into one with square matrix polynomials.
Thus we will let
\[
\A= \sum_{j=0}^d  A_j t^j \in \RR[t]^{n\times n}
\]
be a matrix polynomial, with coefficients $A_0,\ldots,A_d\in\RR^\nxn$.
The \emph{degree} $\deg\A$ of $\A$ is defined as $d$, assuming that $ \pmat{ A}_d \neq 0$.
      
 We say that $\A$ is singular if $\det(\A)$ is the zero polynomial in $\RR[t]$, or equivalently, that there is a
$\pmat{b} \in \RR[t]^{n\times 1}$ such that $\A\pmat{b}\equiv 0$.  The kernel of $\A$ is
$\ker \A = \{ \pmat{b}\in \RR[t]^{n\times 1} : \A \pmat{b} = 0
\}$
and the rank of $\A$ is $n-\dim \ker \A$ (as a vector space over $\RR(t))$. Then 
$\A$ has rank at most $n-r$ if there
exists at least $r$ linear independent vectors $\{\pmat{b}_i\}_{i=1, ... , 
r}$ satisfying $\A \pmat{b}_i = 0$.

For $a \in \RR[t]$, define
\begin{equation}
\label{eq:companion}
\companion(a) = \companion^{(n,d)}(a) =
\begin{pmatrix}
  a_0  \\
  a_1 & a_0 \\
  \vdots &  & \ddots \\
  a_d    &  &  & a_0 \\
  & a_d & & a_1 \\
  & &  \ddots &   \vdots \\
  & & & a_d
\end{pmatrix} \in \RR^{(\mu+d) \times \mu},
\end{equation}
where $\mu=nd+1$.  $\companion(a)$ is a Toeplitz matrix. Such matrices are conveniently used to describe polynomial multiplication in the sense that if $c = a \cdot b$ 
with $a$ of degree $d$ and $c\in\RR[t]$ of degree at most $\mu-1$, then $\vec(c) = \companion(a) \cdot \vec(b)$ where $\vec(p)$ is the vector of coefficients of a polynomial.

\begin{definition}
  The \emph{$\RR$-embedding} of $\A\in \RR[t]^{n\times n}$ is
  \[  
  \Alin = \begin{pmatrix}
    \companion (A_{1,1}) & \cdots & \companion(A_{1,n}) \\
    \vdots & &  \vdots \\
    \companion (A_{n,1}) & \cdots & \companion(A_{n,n})
  \end{pmatrix} \in \RR^{ n(\mu+d) \times n \mu}.
  \]
  For $\pmat{b}\in\RR[t]^{n\times 1}$ of degree $\mu-1$ the
  $\RR$-embedding of $\pmat{b}$ is
  \[ 
  \blin = (b_{1,0}, b_{1,1}, \ldots, b_{1,\mu-1}, \ldots, b_{n,0},
  \ldots,b_{n,\mu-1})^T \in \RR^{n \mu\times 1}.
  \]
\end{definition}

Note that $\A \cdot b=0$, for $\pmat{b}\in\RR[t]$ of degree at most
  $\mu-1$ if and only if $\Alin \cdot \blin={0}\in\RR^{n\mu\times 1}$. 
  This property is central to our work in the coming sections.
  
  For ease of notation we will take 
\[N= n(\mu+d)=n^2d+n(d+1), ~~M= n \mu=n^2d+n \mbox{ and }  
R \geq 1\] when dealing with $\RR$-embeddings in subsequent sections. We
note that $\Alin$ is a block-Toeplitz matrix, and as such one method
of understanding the lower rank problem is to find close by structured rank
deficient block-Toeplitz matrices, a typical structured low rank
approximation problem. Some authors refer to such embeddings as a 
(permuted) Sylvester matrix associated with $\A$. We avoid this terminology as 
it is ambiguous when considering Sylvester matrices occurring in (approximate) 
GCD computations.

Unlike the standard linearizations in \citep[Section 7.2]{GolLanRod09}
used to turn arbitrary degree matrix pencils into linear pencils, this
$\RR$-embedding is kernel preserving for matrix polynomials of arbitrary
degree.  In particular, $\bb\in \ker \A$ with $\deg \bb \leq \mu$ implies 
$\blin\in\ker\Alin$.
The $\RR$-embedding is also quasi-distance preserving, since
$\norm{\A}_F^2 = \frac{\norm{\Alin}_F^2}{\mu}$.



\medskip


\begin{problem} 
  \label{prb:problem-original}
  \noindent\textbf{Main Problem:} \\
  Given $\A\in\RR[t]^\nxn$ non-singular of degree $d$  and an integer $r \leq n$
  determine $\Delta \A \in \RR[t]^{n\times n}$, with
  $\deg \Delta\A_{ij}\leq \deg \A_{ij}$ for all $1\leq i,j\leq n$,
  and $n~-~r$ linearly independent vectors $\pmat{b}_k \in\RR[t]^{n\times 1}$, such that $\norm{\Delta \A}$ is
  (locally) minimized, subject to the constraint that $(\A + \Delta \A)\pmat{b}_k = 0$ and $\|\pmat{b}_k  \| = 1$.
\end{problem}

Note that this is minimizing a convex objective function subject to non-convex 
constraints. However, the equality constraints are linear in each argument.
It is still not clear that Problem~\ref{prb:problem-original} is
well-posed in the current form. We will prove that solutions exist,
that is, there is an attainable global minimum value and not an
infimum.


\begin{lemma}\label{lem:degree-bounds}
  $\A\in \RR[t]^{n\times n}$ is singular if and only if there exists a
  $\pmat{b}\in \RR[t]^{n\times 1}$ with
  $\deg \pmat{b} \leq nd = \mu-1$ such that $\pmat{Ab}=0$.
\end{lemma}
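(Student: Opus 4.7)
The plan is to handle the two directions separately; only the ``only if'' direction requires real work since the ``if'' direction is immediate from the definition: if $\A\bb = 0$ for a nonzero $\bb \in \RR[t]^{n \times 1}$, then the columns of $\A$ are linearly dependent over $\RR(t)$, so $\det \A = 0$.

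For the ``only if'' direction, I would argue constructively via Cramer's rule. Let $r = \rank_{\RR(t)} \A \leq n-1$. Choose a maximal set of linearly independent columns of $\A$; after reindexing (which does not affect degrees), assume these are the first $r$ columns $\A_1, \dots, \A_r$, and pick $r$ row indices yielding an $r \times r$ submatrix $M = (\A_{s,t})_{s,t=1,\ldots,r}$ with $\det M \neq 0$. Since $\A_{r+1}$ lies in the $\RR(t)$-span of $\A_1, \ldots, \A_r$, there exist unique $c_1, \ldots, c_r \in \RR(t)$ with $\A_{r+1} = \sum_i c_i \A_i$, and the same $c_i$ are determined by the restriction to the chosen $r$ rows (uniqueness, since $M$ is invertible). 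Cramer's rule then gives $c_i = \det M_i / \det M$, where $M_i$ replaces the $i$-th column of $M$ by the corresponding entries of $\A_{r+1}$. Clearing denominators yields
\[
(\det M)\, \A_{r+1} \;-\; \sum_{i=1}^{r} (\det M_i)\, \A_i \;=\; 0,
\]
which is a column relation holding on \emph{every} row of $\A$ (not merely the selected $r$), by uniqueness of the coefficients.

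Taking $\bb = (-\det M_1, \ldots, -\det M_r, \det M, 0, \ldots, 0)^T \in \RR[t]^{n \times 1}$ gives $\A \bb = 0$, and $\bb \neq 0$ because its $(r+1)$-st entry $\det M$ is nonzero. Each entry of $\bb$ is an $r \times r$ minor of $\A$, so by the Leibniz formula has degree at most $r \cdot d \leq (n-1)d < nd = \mu - 1$, giving the required degree bound.

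The main obstacle, such as it is, is simply being careful that the Cramer-derived relation holds on all $n$ rows of $\A$ rather than only on the $r$ rows used to extract the coefficients; this follows since the true expansion $\A_{r+1} = \sum c_i \A_i$ exists in $\RR(t)^n$ (by the rank assumption) and the $c_i$ are uniquely determined by the invertible restriction to rows $1, \ldots, r$, so the rational coefficients computed via Cramer must coincide with the true ones. Everything else is a direct degree bound on determinants of submatrices whose entries have degree at most $d$.
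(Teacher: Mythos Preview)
Your proof is correct and follows essentially the same approach as the paper: both arguments select a maximal non-singular submatrix, apply Cramer's rule to express the dependent column, clear denominators, and bound the resulting minors by $rd \leq nd$. Your version is in fact slightly more explicit about why the Cramer-derived relation extends to all $n$ rows, and you obtain the marginally sharper bound $(n-1)d$; one tiny quibble is that the strict inequality $(n-1)d < nd$ fails when $d=0$, but the required bound $\leq \mu-1$ still holds.
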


\begin{proof}
  Suppose that $\A$ has rank $s<n$.  By permuting rows and columns we
  may assume without loss of generality that the leading $s\times s$
  submatrix of $\A$ is non-singular.  There is a unique vector of the
  form \[\pmat{c}=(b_1/\gamma,\ldots,b_s/\gamma,-1,0,\ldots,0)\] from
  Cramer's rule such that $\A\pmat{c}=0$, where $\gamma\in\RR[t]$ is
  the determinant of the leading $s\times s$ minor of $\A$, and all of
  $b_1,\ldots,b_s,\gamma\in\RR[t]$ have degree at most $sd\leq nd$.
  Multiplying through by $\gamma$, we find that
  $\pmat{b}=\gamma\pmat{c}$ satisfies the requirements of the lemma.
\end{proof}
See \cite[Corollary~5.5]{BecLabVil06} for an alternative proof.


\begin{lemma}
  \label{lem:rank-def}
  $\A$ is singular if and only if $\Alin$ does not have full column rank.
\end{lemma}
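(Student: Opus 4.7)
The plan is to prove this as a direct consequence of Lemma~\ref{lem:degree-bounds} together with the kernel-correspondence property of the $\RR$-embedding noted just after its definition, namely that for $\pmat{b}\in\RR[t]^{n\times 1}$ of degree at most $\mu-1$, one has $\A\pmat{b}=0$ if and only if $\Alin\blin=0$.

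For the forward direction, I would assume $\A$ is singular. By Lemma~\ref{lem:degree-bounds} there exists a nonzero $\pmat{b}\in\RR[t]^{n\times 1}$ with $\deg\pmat{b}\leq \mu-1$ such that $\A\pmat{b}=0$. Its $\RR$-embedding $\blin\in\RR^{n\mu\times 1}$ is then a nonzero vector (the coefficient map is clearly injective on polynomials of degree at most $\mu-1$), and the kernel-correspondence gives $\Alin\blin=0$. Hence $\Alin$ has a nontrivial nullspace and does not have full column rank.

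For the converse, suppose $\Alin$ does not have full column rank. Pick a nonzero $\blin\in\RR^{n\mu\times 1}$ in its kernel, and let $\pmat{b}\in\RR[t]^{n\times 1}$ be the (nonzero) polynomial vector of degree at most $\mu-1$ whose coefficient listing is $\blin$. By the kernel-correspondence, $\A\pmat{b}=0$, which exhibits a nonzero element of $\ker\A$ and so $\A$ is singular.

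There is no real obstacle here; the work is already done by Lemma~\ref{lem:degree-bounds} (which ensures that if a kernel element exists then one of sufficiently small degree exists, so that it fits inside the window captured by $\Alin$) and by the Toeplitz structure of $\companion(\cdot)$ which encodes polynomial multiplication as matrix-vector product. The only thing to be careful about is verifying that the degree bound $\mu-1 = nd$ chosen in the definition of $\Alin$ is large enough: this is exactly what Lemma~\ref{lem:degree-bounds} guarantees, so the two lemmas slot together cleanly.
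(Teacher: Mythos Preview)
Your proof is correct and follows essentially the same approach as the paper: both use Lemma~\ref{lem:degree-bounds} for the forward direction and the kernel-correspondence of the $\RR$-embedding for the converse. The only cosmetic difference is that the paper phrases the converse as a contrapositive, whereas you argue it directly.
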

\begin{proof}
  If $\A$ is rank deficient then there exists
  $\pmat{b} \in \RR[t]^{n\times 1}$ with $\deg \pmat{b} \leq \mu-1$
  such that $\pmat{Ab}=0$. $\Alin$ has a non-trivial kernel and,
  $\blin \in \ker \Alin$ by construction. Conversely, suppose that
  $\A$ has full rank. Then for all $\pmat{b}\in \RR[t]^{n\times 1}$ we
  have $\pmat{Ab} \neq 0$ which implies that $\Alin \blin \neq 0$ or
  $\ker \Alin$ is trivial.
\end{proof}

We recall the Singular Value Decomposition as the
primary tool for finding the distance to the nearest
\emph{unstructured} rank deficient matrix over $\RR$ or~$\CC$.

\begin{definition}
  A Singular Value Decomposition (SVD) of
  $\mat{C} \in \RR^{N\times M}$ is given by
  $\mat{C} = \mat{Q} \cdot \Sigma \cdot \mat{P}^T$, where
  $\mat{Q} \in \RR^{M\times M},\mat{P}^T \in \RR^{N\times N}$ are
  orthogonal matrices and $\Sigma =$
  $\diag(\sigma_1,\ldots,\sigma_{M}) \in \RR^{M\times N}$ is a
  diagonal matrix consisting of the singular values of $\mat{C}$ in
  descending order of magnitude.  See \citep{GolLoa12}.
\end{definition}

The following fact is a standard motivation for the SVD.
\begin{fact}[\cite{EckYou36}]
  \label{fact:nearest}
  Suppose $C=Q\Sigma P^T\in\RR^{N\times M}$ as above has full column rank,
  with $N\geq M$.  Then $\Delta C=Q \diag(0,\ldots,0,-\sigma_{M}) P^T$
  is such that $C+\Delta C$ has column rank at most $M-1$,
  $\norm{\Delta C}_F=\sigma_M$, and $\Delta C$ is a perturbation of minimal
  Frobenius norm which reduces the column rank of $C$.
\end{fact}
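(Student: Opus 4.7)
The plan is to establish the three claims in the order stated, leveraging orthogonal invariance of the Frobenius norm and the variational characterization of the smallest singular value.

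First, I would verify the rank claim. Write $\Delta C = Q E P^T$ where $E = \diag(0,\ldots,0,-\sigma_M) \in \RR^{N \times M}$, so that
\[
C + \Delta C \;=\; Q(\Sigma + E) P^T \;=\; Q \cdot \diag(\sigma_1,\ldots,\sigma_{M-1},0) \cdot P^T.
\]
Since $Q$ and $P$ are orthogonal (hence invertible on their active block), the column rank of $C + \Delta C$ equals the column rank of $\Sigma + E$, which has a zero column and therefore rank at most $M-1$. Next, I would compute the norm: the Frobenius norm is invariant under multiplication by orthogonal matrices, so $\norm{\Delta C}_F = \norm{E}_F = \sigma_M$.

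For the minimality claim, the key step is the following variational argument. Let $\Delta C' \in \RR^{N \times M}$ be any perturbation such that $C + \Delta C'$ has column rank at most $M-1$. Then $\ker(C + \Delta C') \ne \{0\}$, so there exists a unit vector $v \in \RR^M$ with $(C + \Delta C')v = 0$, i.e. $\Delta C' v = -C v$. Using the elementary inequalities $\norm{\Delta C'}_F \geq \norm{\Delta C'}_2 \geq \norm{\Delta C' v}_2$ for unit $v$, we obtain
\[
\norm{\Delta C'}_F \;\geq\; \norm{Cv}_2.
\]
The final ingredient is $\norm{Cv}_2 \geq \sigma_M$ for every unit vector $v$, which follows by expanding $v$ in the orthonormal basis given by the columns of $P$: writing $v = \sum_{i=1}^M \alpha_i p_i$ with $\sum \alpha_i^2 = 1$, one gets $\norm{Cv}_2^2 = \sum \sigma_i^2 \alpha_i^2 \geq \sigma_M^2$. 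Combining gives $\norm{\Delta C'}_F \geq \sigma_M = \norm{\Delta C}_F$, as required.

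There is no real obstacle here: the result is the classical Eckart–Young–Mirsky theorem specialized to the Frobenius norm and to the rank-$(M-1)$ case. The only point that deserves a clean statement is the chain $\norm{\Delta C'}_F \geq \norm{\Delta C' v}_2$, which uses the standard fact that the spectral norm is dominated by the Frobenius norm and that the spectral norm bounds $\norm{\Delta C' v}_2$ for unit $v$. Everything else is bookkeeping with orthogonal invariance.
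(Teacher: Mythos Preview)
Your proof is correct and entirely standard. The paper does not actually prove this statement: it is recorded as a \texttt{fact} environment with a citation to \cite{EckYou36} and no accompanying argument, so there is nothing in the paper to compare your approach against. Your write-up supplies exactly the classical Eckart--Young argument (orthogonal invariance for the first two claims, and the variational lower bound $\norm{Cv}_2\geq\sigma_M$ combined with $\norm{\Delta C'}_F\geq\norm{\Delta C'}_2\geq\norm{\Delta C'v}_2$ for minimality), which is precisely what the citation points to.
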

\begin{lemma}
  \label{lem:nearest-unstructured-distance}
  Given a non-singular $\A\in\RR[t]^\nxn$, and 
  $\Delta\A\in\RR[t]^\nxn$ such that $\B=\A+\Delta\A$ is singular, it
  is the case that $\norm{\DeltaAlin}\geq \sigma_{n\mu}(\Alin)$.
\end{lemma}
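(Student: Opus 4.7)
The plan is to reduce this to the Eckart--Young bound on the smallest singular value, by leveraging the two structural results already established: Lemma~\ref{lem:rank-def} (singularity of $\A$ is equivalent to column rank deficiency of $\Alin$) and Fact~\ref{fact:nearest} (the Frobenius-norm distance to the nearest column-rank-deficient matrix equals the smallest singular value).

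First I would record the elementary but essential observation that the map $\A \mapsto \Alin$ is $\RR$-linear. This is immediate from the block definition of $\Alin$ together with the fact that each scalar Toeplitz block $\companion(\cdot)$ is linear in its polynomial argument. Consequently, under the implicit degree convention $\deg \Delta\A_{ij} \leq \deg \A_{ij}$ (so that $\DeltaAlin$ lives in the same $\RR^{N\times M}$ as $\Alin$), we have
\[
\Blin \;=\; \widehat{\mathcal{A}+\Delta\mathcal{A}} \;=\; \Alin + \DeltaAlin.
\]
Next, since by hypothesis $\B = \A + \Delta\A$ is singular, Lemma~\ref{lem:rank-def} gives that $\Blin$ does not have full column rank, i.e.\ $\Alin + \DeltaAlin$ has column rank at most $M-1 = n\mu - 1$.

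Applying Fact~\ref{fact:nearest} to $\Alin \in \RR^{N\times M}$ (which has full column rank $M = n\mu$ because $\A$ is non-singular, again by Lemma~\ref{lem:rank-def}), the minimum Frobenius norm of any perturbation that drops the column rank of $\Alin$ is exactly $\sigma_{M}(\Alin) = \sigma_{n\mu}(\Alin)$. Since $\DeltaAlin$ is such a perturbation, we obtain
\[
\norm{\DeltaAlin}_F \;\geq\; \sigma_{n\mu}(\Alin),
\]
which is the desired inequality.

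The only subtlety, and arguably the only point that requires care, is the linearity of the embedding and the associated degree/size compatibility: one must be sure that $\DeltaAlin$ really is the same shape of matrix as $\Alin$, so that the Eckart--Young distance bound applies directly. Aside from this bookkeeping, there is no hard step; the lemma is essentially a translation of the problem into the scalar matrix world via the $\RR$-embedding, where the classical SVD bound finishes the argument.
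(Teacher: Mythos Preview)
Your proof is correct and follows essentially the same approach as the paper: invoke Lemma~\ref{lem:rank-def} to see that $\Blin$ is column-rank-deficient, then apply the Eckart--Young bound of Fact~\ref{fact:nearest}. Your added remarks on the linearity of the embedding and the shape-compatibility of $\DeltaAlin$ make explicit a bookkeeping point that the paper leaves implicit, but the core argument is identical.
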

\begin{proof}
  By Lemma \ref{lem:rank-def} above, $\Blin$ is not of full column
  rank.  Thus, by Fact \ref{fact:nearest}
  $\norm{\DeltaAlin}_F\geq \sigma_{n\mu}(\A)$.
\end{proof}


\begin{corollary}
  The set of  rank $r>1$ matrices over $\RR[t]^{n\times n}$ of degree
  at most $d$ is open, or equivalently, the set of all matrices of rank 
  at-most $n-r$ over $\RR[t]^{n\times n}$ of degree at most $d$ is closed.
\end{corollary}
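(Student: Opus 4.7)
The plan is to realize the set of rank-at-most-$(n-r)$ matrix polynomials as a real algebraic variety in the coefficient space, which is automatically closed in the Euclidean (hence Frobenius) topology; the ``open'' formulation then follows by complementation.

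First I would identify $\{\A\in\RR[t]^{n\times n}:\deg\A\le d\}$ with $\RR^{n^2(d+1)}$ through the coefficient map $\A\mapsto(A_0,A_1,\ldots,A_d)$. Under this identification the Frobenius norm from \eqref{frobNorm} is exactly the standard Euclidean norm, so topological closedness transfers cleanly between the two pictures and ``closed'' has its usual meaning.

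Next I would invoke the standard minor characterization of rank over a field: $\A$ has rank at most $n-r$ as a matrix over $\RR(t)$ if and only if every $(n-r+1)\times(n-r+1)$ minor of $\A$ vanishes as an element of $\RR[t]$. Each such minor is a polynomial $m(t)=\sum_k m_k t^k$ of degree at most $(n-r+1)d$ in $t$, and each coefficient $m_k$ is a fixed polynomial of bounded degree in the entries of $A_0,\ldots,A_d$.

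To finish, I would observe that requiring all of the $m_k$ from all such minors to vanish simultaneously is the intersection of finitely many zero-loci of continuous (indeed polynomial) real-valued functions on $\RR^{n^2(d+1)}$. This intersection is therefore closed, and it coincides exactly with the set of rank-at-most-$(n-r)$ matrix polynomials, giving the corollary. The only subtlety worth flagging, rather than an obstacle to overcome, is that ``rank'' here means rank over the fraction field $\RR(t)$ and not any pointwise rank of some evaluation $\A(t_0)$; the minor characterization is precisely what guarantees that this a priori field-dependent notion is cut out by a finite system of real polynomial equations in the coefficients of $\A$. One could alternatively route the argument through the $\RR$-embedding and the lower semicontinuity of $\rank(\Alin)$, but translating ``$\rank\A\le n-r$'' into an exact condition on $\rank\Alin$ requires care with minimal polynomial bases, so the direct minor argument is cleaner.
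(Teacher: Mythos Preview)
Your proof is correct and self-contained: the minor characterization of rank over the fraction field $\RR(t)$ reduces the condition $\rank\A\le n-r$ to the simultaneous vanishing of finitely many polynomial functions of the coefficients of $A_0,\ldots,A_d$, and a zero-locus of continuous functions is closed. There is nothing to patch.

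The paper, however, arrives at this corollary by a different route. It is stated without proof immediately after Lemma~\ref{lem:nearest-unstructured-distance}, and the intended argument is analytic rather than algebraic: if $\A$ is non-singular then $\norm{\DeltaAlin}_F\ge\sigma_{n\mu}(\Alin)>0$ for any $\Delta\A$ making $\A+\Delta\A$ singular, so every non-singular $\A$ has a Frobenius ball around it containing only non-singular matrix polynomials. This directly gives openness for the full-rank set (the case $r=1$); the general rank statement is then meant to follow by the same SVD mechanism applied to higher singular values of the embedding. Your algebraic-variety argument sidesteps the $\RR$-embedding entirely and handles all $r$ uniformly in one stroke, which is cleaner; the paper's approach, on the other hand, is quantitative---it produces an explicit lower bound $\sigma_{n\mu}(\Alin)$ on the distance to the singular locus, which is exactly what is needed downstream for the existence and separation results. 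You even flag this alternative in your last sentence, correctly noting that matching $\rank\A\le n-r$ to a precise rank condition on $\Alin$ is more delicate than the $r=1$ case.
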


\begin{theorem}[Existence  of Solutions]
  The minimization posed in Problem~\ref{prb:problem-original} has an
  attainable global minimum if
  $\deg \Delta \pmat{A}_{i,j} \leq \deg \pmat{A}_{i,j}$ for all $1\leq
  i,j\leq n$.
\end{theorem}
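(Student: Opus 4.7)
My plan is to prove existence by a standard finite-dimensional compactness argument. The degree constraint confines $\Delta\A$ to the finite-dimensional $\RR$-vector space
\[
V=\{\mat{X}\in\RR[t]^{n\times n}:\deg X_{ij}\leq \deg\A_{ij}\ \text{for all}\ i,j\},
\]
on which the Frobenius norm (\ref{frobNorm}) is an honest Euclidean norm. Let
\[
F=\{\Delta\A\in V:\rank(\A+\Delta\A)\leq n-r\}
\]
be the feasible set. First I would dispose of non-emptiness and boundedness of the infimum: $\Delta\A=-\A$ lies in $V$ and makes $\A+\Delta\A=0$, which has rank $0\leq n-r$, so $\mu^\star:=\inf_{\Delta\A\in F}\norm{\Delta\A}\leq \norm{\A}<\infty$.

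The main step is to show that $F$ is closed in $V$. The condition $\rank(\A+\Delta\A)\leq n-r$ is equivalent to the identical vanishing, as polynomials in $t$, of every $(n-r+1)\times(n-r+1)$ minor of $\A+\Delta\A$; each such minor has coefficients that are polynomial functions of the real coordinates of $\Delta\A\in V$. Hence $F$ is the common zero set of finitely many polynomials on $V$, and so closed in the Euclidean topology. An alternative route, which should port cleanly to the affinely structured setting, is to argue via the $\RR$-embedding: by Lemma~\ref{lem:rank-def}, rank deficiency of $\A+\Delta\A$ is equivalent to rank deficiency of $\Alin+\DeltaAlin$, and rank deficiency of a constant matrix is the vanishing of continuous singular-value functions of $\Delta\A$.

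The rest is routine. I would pick a minimizing sequence $\{\Delta\A^{(\ell)}\}\subset F$ with $\norm{\Delta\A^{(\ell)}}\to\mu^\star$; since it is bounded in the finite-dimensional space $V$, Bolzano--Weierstrass extracts a subsequence converging to some $\Delta\A^\star\in V$. Closedness of $F$ gives $\Delta\A^\star\in F$, and continuity of the norm gives $\norm{\Delta\A^\star}=\mu^\star$, so the infimum is attained. The required unit-norm linearly independent kernel vectors $\pmat{b}_k$ are then supplied by applying Lemma~\ref{lem:degree-bounds} to a basis of $\ker_{\RR(t)}(\A+\Delta\A^\star)$ (clearing denominators to land in $\RR[t]^{n\times 1}$ at degree at most $nd$) and normalizing to Frobenius norm one. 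I do not expect any serious obstacle: the only substantive point is the closedness of $F$, and the rest is Heine--Borel in the finite-dimensional space $V$ combined with lower semi-continuity of rank.
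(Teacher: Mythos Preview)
Your proposal is correct and follows essentially the same compactness argument as the paper: non-emptiness via $\Delta\A=-\A$, closedness of the rank-at-most set in a finite-dimensional Euclidean space, and the extreme value/Heine--Borel theorem to conclude attainment. The only cosmetic differences are that the paper works with the shifted variable $\pmat{C}=\A+\Delta\A$ and invokes Weierstrass' theorem directly on the compact set, whereas you run Bolzano--Weierstrass on a minimizing sequence; your explicit justification of closedness via vanishing minors and your extraction of the unit-norm kernel vectors $\pmat{b}_k$ from Lemma~\ref{lem:degree-bounds} are welcome details the paper leaves implicit.
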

\begin{proof}
  Let 
  \begin{align*} 
    S= & \left \{ \pmat{C} \in \RR[t]^{n\times n} ~|~
         \rank \pmat{C} \leq
         n-r \wedge \deg \pmat{C} \leq  d\right\} \\
       &\cap \left\{ \pmat{C} \in \RR[t]^{n\times n}
         | \norm{\pmat{C}}_F^2\leq \norm{\A}_F^2
         \right\}.
  \end{align*}
  $S$ is the intersection of a closed and bounded set and a closed
  set, hence $S$ is closed and bounded. $S$ is isomorphic to some
  closed and bounded subset of Euclidean space, hence by the
  Heine-Borel theorem, $S$ is compact.  To show the set is non-empty,
  we note that, by the degree assumption on $\Delta \A$,
  $\Delta \A = -\A$ is a feasible point independent of rank.

  Let $\pmat{C} \in S$ then
  $\norm{\A-\pmat{C}}_F^2 = \norm{\Delta \A}_F^2$ is a continuous
  function over a compact set. By Weierstrass' theorem it has an
  attainable global minimum.
\end{proof}


It is important not to over-constrain the problem with a choice of
$\Delta \A$, since otherwise the feasible set might be empty. Another
reasonable choice of $\Delta \A$ which we can handle, is that the
perturbation has the same coefficient structure/support as $\A$, that is, zero
terms in polynomial entries are preserved.

We note that this result says nothing about uniqueness or separation of 
solutions or any local properties. All that has been shown is that if the 
perturbations are in the same space as the input, and one seeks a rank at-most 
approximation, then there is an attainable global minimum 
value, i.e. not an infimum. If one wants a minimal solution  with the rank 
being exactly $n-r$,  
then there is no guarantee that there is an  attainable global minimum to 
Problem~\ref{prb:problem-original}.

\section{Rank Factorization}\label{sec:rankFact}

A natural formulation of the problem that encompasses the rank
implicitly is to perform a rank factorization and write
$\pmat{A}+\Delta \pmat{A} = \pmat{UV}$ for
$\pmat{U}\in \RR[t]^{n\times (n-r)}$ and $\pmat{V}\in \RR[t]^{(n-r)\times
  m}$. Here $\pmat{UV}$ is subject to some constraints that preserve
the structure of $\Delta \pmat{A}$ (i.e., that we do not perturb any
coefficients we are not allowed to, typically that
$\deg\Delta\A_{ij}\leq \deg\A_{ij}$, but possibly also preserving the
zero coefficients and not introducing a larger support). This is a
non-linear least squares problem. 
However solutions are not unique. Indeed, if
$\pmat{Z}\in \RR[t]^{(n-r)\times (n-r)}$ is unimodular (i.e.,
$\det(\pmat{Z})\in \RR^*$), then $\pmat{UZ}$, $\pmat{Z}^{-1}\pmat{V}$
is another rank $n-r$ factorization, and we obtain an infinite family.
While normalizing over matrix polynomial rank-factorizations is
difficult, it is much easier to exploit the quasi-distance preserving
property of $\norm{\cdot}_F$ and look at rank-factorizations of
$\Alin$, that do not necessarily correspond to $\pmat{U}$ and
$\pmat{V}$.

%

\subsection{Embedded Rank Factorization}

\begin{definition}
  Let $N=(\mu+d)n$, $M=n\mu$ and $R>0$. A rank factorization
   of $\Alin+\DeltaAlin$ is given by writing $\Alin +\DeltaAlin=
   UV$ where  $U \in \RR^{N\times R}$ and $V \in \RR^{R\times M}$ are
   arbitrary (unstructured) matrices over $\RR$.
\end{definition}
Our goal is to find $U,V$ with appropriate dimensions which minimize 
\[
\norm{\Delta\Alin} = \norm{\Alin-UV}
\]
\emph{and} such that $\Delta\Alin$ has the correct Toeplitz-block
structure (i.e., it is an $\RR$-embedding of a matrix polynomial).
This is a problem with a non-convex objective function (that is convex
in each argument) and non-convex constraints. We note that $U$, $V$
have no direct connection with
$\pmat{U},\pmat{V}\in \RR[t]^{n\times n}$.

One may always write $\Alin+\DeltaAlin$ this way via the SVD for
fixed $\Alin$ and $\DeltaAlin$, so in particular the optimal
solution can be written as a rank factorization.  The problem
$\min \norm{\Alin-UV}^2$ such that $UV$ has the same structure as
$\DeltaAlin$ is generally ill-posed and needs to be constrained to
do any meaningful analysis, as there are numerous degrees of
freedom. At first glance, optimizing over rank factorizations appears
to be a harder problem than the original. However it is helpful to
perform analysis on this formulation. In particular, we are able to
prove that optimal values of $\DeltaAlin$ that satisfy first order
conditions (which contains all useful perturbations) are
separated by a constant amount, and that equivalence classes of
solutions are isolated. Additionally, this formulation of the problem is convex 
in each argument (but not jointly convex) and is amenable to block coordinate 
descent methods. 

We next need to demonstrate that the condition that the matrix
$\Delta\Alin=\Alin-UV$ is the $\RR$-embedding of some matrix polynomial
$\Delta\A\in\RR[t]^\nxn$ can be phrased as a single polynomial being
zero. $\Alin$ is generated by a linear structure $\sum_{i=1}^L c_i \Alin^{(i)}$ 
where $c_i \in \RR$ and $ \{ \Alin^{(1)},\ldots, \Alin^{(L)}\} \subseteq 
\RR^{N\times M}$. Define the structural enforcement function 
\[\DCons:\RR^{N\times R}\times \RR^{R\times M} \to \RR \text{ as } 
\DCons(U,V) = \tallnorm{\sum_{i=1}^L c_i \Alin^{(i)} - \Delta\Alin}_F^2.\] We 
note that there exist $c_i$ such that $\DCons(\Delta\Alin)=0$ if and only if 
$\Delta\Alin$ is an $\RR$-embedding of a matrix polynomial.


\begin{problem}
\label{prb:linear-rank-factor}
With $\Alin, U,V$ as above, the constrained $\RR$-embedded rank
factorization problem consists of computing
  $ \min \norm{\Alin - UV}_F^2$ 
subject to the constraints that $U^T U - I =0$ and $\DCons(U,V)=0$. If
$R= M-1$, then this encodes all rank deficient matrix polynomials.
\end{problem}




%
%
%

It is still not clear that Problem~\ref{prb:linear-rank-factor} is
well-posed, as there are many degrees of freedom in $V$, and this
matrix can have arbitrary rank. The enforcement of $U$ as an
orthogonal matrix ($U^TU-I=0$) is allowed for without loss of
generality. Informally then we are looking at all rank factorizations where
where $U$ is orthogonal and $\DCons(U,V)=0$, that is, the
product satisfies the block-Toeplitz structure on $\DeltaAlin$.



We employ the machinery of non-linear optimization to describe the
geometry of the minimal solutions, and hence the nearest appropriately
structured matrices.  See \citep{Ber99} for an excellent overview.

\begin{fact}[{\citet[Section 3.1.1]{Ber99}}]
  For a  sufficiently large $\rho>0$, one has that\footnote{ $\rho$ is 
sometimes known as a 
penalty term.}  
Problem~\ref{prb:linear-rank-factor} is equivalent to computing a solution to  
the
  unconstrained optimization problem
  \[
    \Phi(U,V)=\min_{U,V} \norm{\Alin-UV}_F^2 + \rho
    \norm{\DCons(U,V)}_F^2 + \rho\norm{U^TU-I}_F^2.
  \] 

\end{fact}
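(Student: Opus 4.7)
The plan is to apply standard quadratic-penalty reasoning (as in Bertsekas, Section~3.1.1). Let $p^\star$ denote the optimal value of Problem~\ref{prb:linear-rank-factor}, and $\Phi^\star(\rho)$ the optimal value of the unconstrained penalized problem. The inequality $\Phi^\star(\rho) \le p^\star$ is immediate: evaluating $\Phi$ at any feasible $(U^\circ,V^\circ)$ of Problem~\ref{prb:linear-rank-factor} annihilates both penalty terms, so $\Phi(U^\circ,V^\circ) = \norm{\Alin - U^\circ V^\circ}_F^2$, and minimizing over feasible points gives the bound.

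For the reverse direction I would use coercivity and compactness. For $\rho$ past some threshold, the sublevel set $\{(U,V) : \Phi(U,V) \le p^\star\}$ is bounded: the inequality $\norm{U^TU - I}_F^2 \le p^\star/\rho$ traps $\norm{U}_F$ in a compact interval around $\sqrt{R}$, and then $\norm{\Alin - UV}_F^2 \le p^\star$ together with the approximate left-inverse $U^T$ of $U$ controls $\norm{V}_F$ as well. Hence minimizers $(U_\rho,V_\rho)$ of $\Phi$ exist. From
\[
\rho\bigl(\norm{\DCons(U_\rho,V_\rho)}_F^2 + \norm{U_\rho^TU_\rho - I}_F^2\bigr) \;\le\; \Phi(U_\rho,V_\rho) \;\le\; \Phi^\star(\rho) \;\le\; p^\star ,
\]
one reads off that along any sequence $\rho_k \to \infty$ both penalty terms decay as $O(1/\rho_k)$. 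Extracting a convergent subsequence yields a limit $(U^\star,V^\star)$ that is feasible for Problem~\ref{prb:linear-rank-factor}, and the lower-semicontinuity of the Frobenius norm gives $\norm{\Alin - U^\star V^\star}_F^2 \le \liminf_k \Phi(U_{\rho_k},V_{\rho_k}) \le p^\star$, forcing equality and matching the two optimal values. Together with the monotone convergence $\Phi^\star(\rho) \nearrow p^\star$, this identifies the two problems in the appropriate sense.

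The main obstacle is that pure $\ell_2^2$ penalization delivers exact constraint satisfaction only in the limit $\rho \to \infty$, so the phrase ``for sufficiently large $\rho$'' has to be read asymptotically in the Bertsekas sense: the penalized values increase monotonically to $p^\star$ and every accumulation point of penalized minimizers solves the constrained problem. A genuine finite-$\rho$ exact equivalence would require replacing the quadratic penalties by an $\ell_1$-type or augmented-Lagrangian penalty with tuned multipliers. For this paper the asymptotic correspondence suffices, because subsequent developments only differentiate $\Phi$ to derive KKT and second-order conditions on the equivalent constrained problem, and the constraint-qualification arguments needed there are insensitive to which member of the monotone sequence in $\rho$ is used.
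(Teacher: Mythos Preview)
The paper does not supply a proof for this statement: it is recorded as a \emph{Fact} with a bare citation to \citet[Section 3.1.1]{Ber99} and then used immediately to derive first-order conditions. So there is no proof in the paper to compare your proposal against.

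Your argument is a correct and standard rendering of the quadratic-penalty theory from that reference: the easy inequality $\Phi^\star(\rho)\le p^\star$, coercivity to extract minimizers, decay of the penalty terms as $O(1/\rho)$, and passage to a feasible limit. You also correctly flag a genuine subtlety that the paper glosses over: a pure $\ell_2^2$ penalty does not in general yield exact constraint satisfaction at any finite $\rho$, so ``sufficiently large $\rho$'' must be read in the asymptotic sense of Bertsekas (monotone convergence of values, feasibility and optimality of accumulation points). Your closing remark that this asymptotic reading suffices for the paper's purposes---since only the stationarity condition $\nabla_V\Phi=0$ with active constraints is subsequently used---is apt and matches how the Fact is actually deployed in the text.
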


All the interesting solutions to the minimization of $\Phi(U,V)$ occur at
stationary points.  The first-order necessary condition (on $V$) of
gradients vanishing gives us
\begin{align*}
  \nabla_V \left(\norm{\Alin -U V}_F^2 +  
  \rho\norm{\DCons(U,V)}_F^2)+\rho\norm{U^TU-\mat{I}}_F^2 
  \right)
  =0\\
  \iff  U^T(\Alin-UV) +  \left( \frac{\D}{\D 
  V} \DCons(U,V)^T\right)  
  \rho\DCons(U,V)= 0. 
\end{align*}

If we assume that the constraints are active, that is $U$ is
orthogonal and that $\DCons(U,V)=0$, then we have
$U^T\Alin -V=0$. Of course, there is the other first order necessary
condition requiring that
\[\nabla _U \left( \norm{\Alin-UV}^2 + \rho
\norm{\DCons(U,V)}^2 + \rho\norm{U^TU-\mat{I}}^2 \right) 
=0.\] 
However, we do not need to employ this explicitly in the following.

\begin{theorem}[Strong Separation of Objective] \label{thm:strong-sep}
Suppose $\DeltaAlin$ and $\DeltaAlin ^\star$ are  distinct  (local) 
optimal solutions to Problem~\ref{prb:linear-rank-factor} that satisfy first 
order necessary conditions. Then 
$\norm{\DeltaAlin 
- \DeltaAlin^\star}_2 
\geq  \sigma_{\min}(\Alin)$, where $\sigma_{\min}(\cdot)$ is the 
smallest non-trivial singular value.
\end{theorem}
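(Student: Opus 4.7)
My plan is to use the first-order conditions to rewrite the difference as the image of $\Alin$ under a difference of orthogonal projections, and then reduce the required lower bound to a spectral-norm question for that operator restricted to $\mathrm{range}(\Alin)$.

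First, I invoke the first-order necessary condition derived in the paragraph preceding the theorem: at any feasible stationary point where both $U^T U - I = 0$ and $\DCons(U,V)=0$ are active, the vanishing-gradient condition collapses to $V = U^T\Alin$, so $\Alin+\DeltaAlin = UV = UU^T\Alin$. Writing $P := UU^T$ for the orthogonal projection onto $\mathrm{range}(U)$, this gives $\DeltaAlin = -(I-P)\Alin$. Applied to the two distinct FONC solutions, this yields orthogonal projections $P$ and $P^\star$ with
\[
\DeltaAlin - \DeltaAlin^\star \;=\; (P - P^\star)\,\Alin,
\]
so proving the theorem amounts to bounding $\|(P - P^\star)\Alin\|_2$ below by $\sigma_{\min}(\Alin)$.

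Second, because $\Alin\in\RR^{N\times M}$ has full column rank, we have $\|\Alin x\|_2\ge\sigma_{\min}(\Alin)\|x\|_2$, and substituting $y = \Alin x$ in the supremum defining the spectral norm yields
\[
\|(P - P^\star)\Alin\|_2 \;\ge\; \sigma_{\min}(\Alin)\,\bigl\|(P - P^\star)\big|_{\mathrm{range}(\Alin)}\bigr\|_2.
\]
Since $P - P^\star$ is the difference of two orthogonal projections, its spectral norm is bounded above by $1$. The theorem thus reduces to exhibiting a unit vector $y \in \mathrm{range}(\Alin)$ that is an eigenvector of $P - P^\star$ with eigenvalue $\pm 1$---equivalently, $y\in\mathrm{range}(P)\cap\ker(P^\star)$ or the symmetric alternative.

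Third, to produce such a $y$ I would start from a unit null vector $x$ of the rank-deficient matrix $\Alin + \DeltaAlin^\star$, which exists because its embedded rank is at most $R<M$. Since $\Alin + \DeltaAlin^\star = P^\star\Alin$ by FONC, we have $P^\star(\Alin x) = 0$, so $\Alin x\in\ker(P^\star)$, giving half the eigenvector requirement automatically. A symmetric construction starting from $\ker(\Alin + \DeltaAlin)$ covers the other case, and the hypothesis $\DeltaAlin\neq\DeltaAlin^\star$ is used to guarantee that for a suitable choice of null vector, $\Alin x$ simultaneously lies in $\mathrm{range}(P)$, producing the required $\pm 1$ eigenvector.

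The principal obstacle is precisely this last existence step: in the absence of further structure, two nearby rank-$R$ projections can differ by only a small principal angle, in which case $\|(P - P^\star)|_{\mathrm{range}(\Alin)}\|_2$ is strictly less than $1$ and the reduction yields only a weaker bound. The structural constraint $\DCons(U,V) = 0$ must therefore enter essentially, since the analogous statement for unconstrained rank factorizations is false. I expect to exploit the inclusion $\mathrm{range}(\Alin + \DeltaAlin)\subseteq\mathrm{range}(P)\cap\mathrm{range}(\Alin)$ (and its $\star$-analogue), combined with the rigidity imposed by $\DCons = 0$, to force the two projection ranges to be separated by a genuine fixed vector of $P - P^\star$ rather than by an arbitrarily small rotation.
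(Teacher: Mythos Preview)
Your outline tracks the paper's proof almost exactly: both start from the FONC relation $V = U^T\Alin$ to write $\DeltaAlin - \DeltaAlin^\star = (P - P^\star)\Alin$ with $P = UU^T$, $P^\star = U^\star U^{\star T}$, and then seek a lower bound of the form $\sigma_{\min}(\Alin)$ times an operator-norm quantity built from $P - P^\star$.

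The point where you stop is precisely where the paper's argument diverges from yours. The paper writes
\[
\|(P-P^\star)\Alin\|_2 \;\ge\; \|P - P^\star\|_2 \,\sigma_{\min}(\Alin)
\]
(without the restriction to $\mathrm{range}(\Alin)$ that you correctly flag), then passes from $\|UU^T - U^\star U^{\star T}\|_2$ to $\|I - W\|_2$ with $W = U^T U^\star U^{\star T} U$, and asserts that the nonzero eigenvalues of $W$ lie in $\{-1,+1\}$, so that $W\neq I$ forces $\|I - W\|_2 \ge 1$. No appeal to the structural constraint $\DCons=0$ is made.

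Your hesitation is well placed. The matrix $W = (U^{\star T}U)^T(U^{\star T}U)$ is positive semidefinite with eigenvalues $\cos^2\theta_i$, the squared cosines of the principal angles between $\mathrm{range}(U)$ and $\mathrm{range}(U^\star)$; these need not be $0$ or $1$. Likewise $\|P - P^\star\|_2 = \sin\theta_{\max}$ can be arbitrarily small for distinct projections, which is exactly the small-principal-angle obstruction you name. So the gap you identify in your own plan is also present in the paper's argument: the paper does not close it but relies on an eigenvalue claim that does not hold for general orthonormal $U,U^\star$. Your instinct that additional structure---the constraint $\DCons=0$, or optimality beyond the bare first-order condition on $V$---must be invoked to rule out nearby rotations is the right diagnosis, and carrying it through would genuinely strengthen the argument over what the paper gives.
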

\begin{proof}
From the previously discussed necessary first order condition we have
that there exists $U\in \RR^{N\times R}$, $V\in \RR^{R\times M}$ and 
$U^\star\in \RR^{N\times R^\star}$ 
and $V^\star\in \RR^{R^\star\times M}$ such that
\smallskip
\[
\norm{\DeltaAlin - \DeltaAlin ^\star}_2 =  \norm{U 
V - U^\star V^\star}_2  =  \norm{U U^T \Alin - U 
^{\star} U^{\star T} \Alin}_2. 
\]
Note that $R$ and $R^\star$ need not be the same. 
From this we obtain the sequence of lower bounds
\begin{align*}
 \norm{U U^T \Alin - U^{\star} U^{\star T} \Alin}_2  & \geq \norm{U U^T - 
U^{\star} U^{\star T} }_2  \sigma_{\min}(\Alin)\\
 & =  \norm{\mat{I}- U ^T U^{\star} U^{\star
   T}U}_2\sigma_{\min}(\Alin)  \\
   &\geq \sigma_{\min}(\Alin).
\end{align*}

The symmetric matrix $\mat{W} = U ^T U^{\star} U^{\star T}U$ is a
product of matrices whose non-zero eigenvalues have magnitude
$1$. Symmetric matrices have real eigenvalues, and the non-zero eigenvalues of 
$W$ will be $\pm1$, since $U$ and $U^{\star}$ are orthogonal. Thus 
$\norm{W}_2=1$. 

$\mat{W}$ must have at least one negative eigenvalue or zero 
eigenvalue by the orthogonality
assumption, since $\mat{W}\neq \mat{I}$. Since $\mat{W}$ is symmetric, we 
can diagonalize $\mat{W}$ as a matrix with $\pm1$ and $0$ 
entries on the diagonal. It follows that  
$\norm{\mat{I}-\mat{W}}_2 \geq 1$ and the theorem follows.
\end{proof}
While the separation bound exploited properties of the rank factorization, 
these bounds hold for all formulations of the problem.

\begin{corollary}
  All locally optimal solutions satisfying first order necessary
  conditions 
  are isolated modulo equivalence classes.
\end{corollary}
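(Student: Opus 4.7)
The plan is to read this corollary as an immediate consequence of Theorem \ref{thm:strong-sep}, together with the observation that $\sigma_{\min}(\Alin)>0$ whenever the input $\A$ is non-singular. First I would clarify what ``equivalence class'' means in this context: two pairs $(U,V)$ and $(U',V')$ are declared equivalent when they produce the same embedded perturbation, i.e. when $UV = U'V'$, which is exactly when $\DeltaAlin = \DeltaAlin^\star$. The non-uniqueness of the rank factorization (replacing $(U,V)$ by $(UQ, Q^T V)$ for an orthogonal $Q$ preserves orthogonality of $U$ and leaves $UV$ untouched) is absorbed into this equivalence, so a ``solution modulo equivalence'' is really just the value $\DeltaAlin$ itself.

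Next I would argue as follows. Since $\A$ is non-singular, Lemma \ref{lem:rank-def} gives that $\Alin$ has full column rank, hence $\sigma_{\min}(\Alin) > 0$; call this positive quantity $\eta$. Suppose, for contradiction, that a given local optimum is not isolated modulo equivalence. Then there exists a sequence of locally optimal $\DeltaAlin^{(k)}$ (each satisfying the first-order necessary conditions) with $\DeltaAlin^{(k)} \neq \DeltaAlin^\star$ for all $k$ and $\DeltaAlin^{(k)} \to \DeltaAlin^\star$. For every sufficiently large $k$ we would have $\|\DeltaAlin^{(k)} - \DeltaAlin^\star\|_2 < \eta$, contradicting Theorem \ref{thm:strong-sep}. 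Hence each equivalence class of locally optimal solutions is surrounded by an open ball of radius $\eta/2$ (measured on the $\DeltaAlin$ variable) that meets no other equivalence class of local optima.

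Finally I would translate this isolation statement back to the $(U,V)$ representation. Given any representative $(U,V)$ of a local optimum and any pair $(U',V')$ representing a distinct local optimum, the product map $(U,V) \mapsto UV$ is continuous, so the preimage of the open ball $\{ X : \|X - UV\|_2 < \eta \}$ is an open neighbourhood of $(U,V)$ in $(U,V)$-space which contains no representative of any other equivalence class. This is precisely what ``isolated modulo equivalence classes'' asserts.

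The main obstacle, and the only place where care is needed, is keeping the equivalence-class bookkeeping clean: one must not confuse isolation of $\DeltaAlin$ (which is immediate from the separation bound) with isolation of $(U,V)$ (which is false because of the orthogonal gauge freedom). Framing the corollary as a statement about the quotient induced by $(U,V) \mapsto UV$ and then invoking continuity of this map makes the argument essentially a one-line consequence of the preceding theorem.
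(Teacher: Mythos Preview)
Your proposal is correct and follows essentially the same approach as the paper: both derive a contradiction from Theorem~\ref{thm:strong-sep} by noting that distinct optima in $\DeltaAlin$ are separated by $\sigma_{\min}(\Alin)>0$, and both pass from $(U,V)$-space to $\DeltaAlin$-space via continuity of the product map (the paper phrases this as local Lipschitz continuity with a constant $s$, you use plain continuity and a preimage argument). Your explicit identification of the equivalence relation as $UV=U'V'$ and the two-step structure (isolate in $\DeltaAlin$, then pull back) make the argument slightly cleaner than the paper's one-step Lipschitz estimate, but the substance is the same.
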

\begin{proof}
  Suppose the contrary, that is that $(U,V)$ is a solution
  corresponding to $\DeltaAlin$ and $(U^\star, V^\star)$ is a
  solution corresponding to $\DeltaAlin^\star$. The objective
  function and constraints are locally Lipschitz continuous, so let
  $s>0$ be a Lipschitz constant with respect to $\norm{\cdot}_F$ in
  some open neighborhood.

  If we take $0<\varepsilon < \dfrac{\sigma_{\min}(\Alin)}{s}$ then we
  have
  \begin{align*}
    \sigma_{\min}(\Alin) & \leq \norm{\DeltaAlin - \Delta 
                           \Alin^\star}_2 \\
                         & \leq s\tallnorm{ 
                           \begin{pmatrix}
                             U\\ V
                           \end{pmatrix}
    -\begin{pmatrix}
      U^\star \\ V^\star 
    \end{pmatrix}
    }_F  \\ & < \sigma_{\min}(\Alin), 
  \end{align*}
  which is a contradiction to Theorem \ref{thm:strong-sep}.
\end{proof}
Implicitly the matrix $V$ parametrizes the kernel of $\Alin$. If we normalize 
the kernel of $\Alin$ to contain $\RR$-embeddings of primitive kernel vectors 
then the matrix $V$ can be made locally unique, although we do not employ this 
in the  rank-factorization formulation directly.  
%
%
 \begin{corollary}
  Under a suitable choice of $\norm{\cdot}$ we have that minimal solutions are 
separated. In particular, separation holds for $\norm{\cdot}_1$. 
 \end{corollary}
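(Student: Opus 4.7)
The plan is to leverage Theorem~\ref{thm:strong-sep} and transfer its separation bound from the spectral norm to the 1-norm using only elementary matrix norm inequalities, exploiting the fact that on a finite-dimensional space all norms are equivalent. Theorem~\ref{thm:strong-sep} already establishes $\|\widehat{\Delta\mathcal{A}} - \widehat{\Delta\mathcal{A}}^\star\|_2 \geq \sigma_{\min}(\widehat{\mathcal{A}}) > 0$ for any two distinct locally optimal solutions satisfying first order necessary conditions, so the corollary is really a matter of repackaging this bound.

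For the entrywise 1-norm $\|M\|_1 := \sum_{i,j}|m_{ij}|$ on $\RR^{N \times M}$, I would invoke the standard chain of inequalities $\|M\|_2 \leq \|M\|_F \leq \|M\|_1$, the first being the comparison of the spectral and Frobenius norm (the largest singular value is at most the sum of all squared singular values), and the second following from $(\sum x_i^2)^{1/2} \leq \sum|x_i|$ applied entrywise. Taking $M = \widehat{\Delta\mathcal{A}} - \widehat{\Delta\mathcal{A}}^\star$ and combining with Theorem~\ref{thm:strong-sep} immediately yields $\|\widehat{\Delta\mathcal{A}} - \widehat{\Delta\mathcal{A}}^\star\|_1 \geq \sigma_{\min}(\widehat{\mathcal{A}})$, and an analogous bound on the underlying matrix polynomials $\Delta\mathcal{A}, \Delta\mathcal{A}^\star$ follows after accounting for the fixed scaling between norms on the $\RR$-embedding and norms on the polynomial coefficient vectors. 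For the ``suitable choice'' clause, any norm $\|\cdot\|$ on the space of embeddings satisfying $\|M\| \geq c\|M\|_2$ for some absolute constant $c>0$ produces a separation bound of the form $c\,\sigma_{\min}(\widehat{\mathcal{A}})$, so the statement is really a family of statements parameterized by such norms.

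The main obstacle is essentially a matter of interpretation rather than mathematics. ``1-norm'' is overloaded: it could denote the entrywise $\ell_1$ norm (for which the argument above is a one-liner), the operator 1-norm $\|M\|_1 = \max_j \sum_i |m_{ij}|$ (which only satisfies $\|M\|_1 \geq \|M\|_2/\sqrt{N}$ in general, giving a bound that degrades with dimension), or a norm on matrix polynomials derived from coefficient 1-norms. I would therefore be explicit that the intended norm is the entrywise version on $\RR^{N\times M}$, in which case the chain $\|\cdot\|_2 \leq \|\cdot\|_F \leq \|\cdot\|_1$ transfers the separation without loss. For the operator 1-norm interpretation, one would need to rerun the proof of Theorem~\ref{thm:strong-sep} with $\|I - W\|_1$ in place of $\|I - W\|_2$; since $W$ is symmetric with eigenvalues in $\{-1,0,1\}$ and $W \neq I$, a diagonalization argument still forces at least one diagonal entry of $I-W$ (in the eigenbasis) to be $\geq 1$, but converting this back to the operator 1-norm in the standard basis requires a further constant depending on the orthogonal change of basis, which is the only technically delicate step.
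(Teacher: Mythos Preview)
Your approach is essentially the same as the paper's: the paper's entire proof is the single sentence ``The proof follows immediately from equivalence of matrix norms, as norms are equivalent in a finite dimensional space.'' You have simply made this explicit via the chain $\|\cdot\|_2 \leq \|\cdot\|_F \leq \|\cdot\|_1$ and added a useful discussion of the ambiguity in what ``$\|\cdot\|_1$'' means, which the paper does not address.
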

 The proof follows immediately from equivalence of matrix norms, as norms 
are equivalent in a finite dimensional space. 


While there are too many degrees of freedom to easily obtain a
(locally) quadratically convergent minimization over the rank
factorization, the rank factorization does yield non-trivial insights
into the geometry of the solution space. In particular, the isolation
of solutions indicates first order (gradient) methods will perform
well on the problem. In the 
next section we will introduce a locally
quadratically convergent algorithm for an equivalent form of
Problem~\ref{prb:problem-original} that reduces each equivalence class
of solutions to a single \mbox{solution}.
%
%


 \section{An Iterative Algorithm for Lower Rank Approximation}
\label{sec:algorithm}

In this section we propose an iterative algorithm to solve
Problem~\ref{prb:problem-original} based on Newton's method for
constrained optimization.  Sufficient conditions for quadratic
convergence are that the 
second-order sufficiency holds \citep{Wri05} and local Lipschitz continuity of 
the objective and constraints. We ensure these conditions hold for 
non-degenerate problems by
working on a restricted space of minimal $\RR$-embeddings that remove
degrees of freedom.

\subsection{Minimal System of Equations}
In order to compute a nearby rank $n-r$ approximation we want to solve 
the non-convex optimization problem 
\begin{equation} \label{eqn:rank-at-most}
 \min \norm{\Delta \A}_F^2 \text{~~subject to~~} \begin{cases}
                                          (\A+\Delta \A) \pmat{B} =0 \\
                                          \rank(\pmat{B}) = r.
                                         \end{cases}
 \end{equation}
 
 In the instance of (structured) scalar matrices the rank constraint can be 
enforced by ensuring that $\pmat{B}$ has orthogonal columns\footnote{This 
normalization alone is not sufficient for rapid convergence.} or is in a column 
reduced echelon form. In the instance of matrix polynomials this is not 
sufficient, since polynomial multiples of the same vector will have linearly 
independent combined coefficient vectors.  In order to apply these 
normalizations on the 
coefficient vectors of $\pmat{B}$ we require that the columns be represented 
with a minimal number of equations with respect to $\pmat{B}$.


\begin{definition}[Minimal $\RR$-Embedding]
%
  Suppose 
  $\A \in \RR[t]^{n\times n}$ with $\RR$-embedding $\Alin$. The
  vector $\bb \in \RR[t]^{n\times 1}$, with $\RR$-embedding $\blin$, is
  said to be {\em minimally $\RR$-embedded} in $\Alin$ if
  $\ker \Alin = \langle\blin\rangle$ (i.e., a dimension 1 subspace).
%
%
%
  We say that $\blin$ is {\em minimally degree $\RR$-embedded} in $\Alin$ if
  (1) $\blin$ is minimally $\RR$-embedded in $\Alin$ and (2)
$\blin$ corresponds to a primitive kernel vector $\pmat{b}$, that is
  $\gcd( \pmat{b}_{1}, \ldots, \pmat{b}_{n})=1$.
%
\end{definition}

We note that this definition ensures minimally $\RR$-embedded
vectors are unique (up to scaling a factor), or that $(\Alin_j 
+\DeltaAlin_j)\Blin[*,j] 
=0$ has a
(locally) unique solution for fixed $\DeltaAlin$.  In the minimal embedding, we 
will assume, without loss of generality, that redundant or equations known in 
advance, such as $0=0, \Delta\Alin_{ij}=0$ or $\Blin_{ij}=0$ corresponding to 
known entries are removed 
for some indices of $i$ and $j$. 
Some of these trivial equations occur because 
of the CREF assumption, while others occur from over-estimating degrees of 
entries.

This allows us to reformulate $(\A+\Delta \A) \pmat{B}=0$ 
as a (bi-linear) system of equations 
\begin{equation} \label{eqn:minimal-system}
\{ (\Alin_j + \Delta \Alin_j)\Blin[*,j]=0\}_{j=1}^{r}
\end{equation}
where the $j^{th}$ column of $\pmat{B}$ is 
minimally degree embedded in the system $(\Alin_j+\Delta \Alin_j)$.   
We also 
note that assuming $\pmat{B}$ is in a column-reduced echelon form essentially 
requires us to guess the pivots in advance of the optimal solution, which is 
only possible with a good initial guess. The benefit of this approach is that 
if the pivots are not guessed correctly, we are still able to compute a $n-r$ 
approximation of $\A$.

In order to exclude trivial solutions, we can assume that the pivot elements of 
$\pmat{B}$ have a norm bounded away from zero.
Let $\N(\blin_i)$ be a 
normalization vector such that $\N(\blin_i)^T \blin_i =1$ which implies that 
the CREF pivots are bounded away from zero. For example, take the pivot to 
have unit norm.
Note that other normalization vectors are possible, such as $\N(\blin_i) = 
\blin_i$ (which corresponds to each column having a unit norm) if the initial 
guess is adequately close, or we could take the pivot element to be a monic 
polynomial.   Of course there are several other permissible normalizations.

Define the matrix $\Alin_i$ to have  the column $\blin_i =\Blin[1..n,i]$ 
minimally  degree embedded. 
We can express  
\eqref{eqn:minimal-system} in a vector-matrix form as follows.

\begin{equation} \label{eqn:rank-at-most-system}
\begin{pmatrix}
   \Alin_1 + \DeltaAlin_1 \\
   & \Alin_2+\DeltaAlin_2 \\
   && \ddots \\
   &&&  \Alin_r+\DeltaAlin_r \\ \hline 
   \N(\blin_1)^T \\
   & \N(\blin_2)^T \\
   && \ddots \\
   &&& \N(\blin_r)^T
  \end{pmatrix} 
  \begin{pmatrix}
  \blin_1  \\ \blin_2  \\ \vdots \\ \blin_r
  \end{pmatrix} = 
  \begin{pmatrix}
      0 \\
      0\\
      \vdots \\
      0 \\ \hline
      1 \\
      1\\
      \vdots \\
      1
  \end{pmatrix}
\end{equation}
 has a (locally) unique solution for fixed $\Delta\A$.

\subsection{Lagrange Multipliers and Optimality 
Conditions}\label{ssec:optimality}
In order to solve \eqref{eqn:rank-at-most} we will use the method of Lagrange 
multipliers \citep{Ber99}.

Let $M(\Delta \A, \B)$ 
be the vector of residuals corresponding to \eqref{eqn:rank-at-most-system}, 
then the Lagrangian is defined as 
\begin{equation}\label{eqn:lagrangian}
 L = \norm{\Delta \A}_F^2 + \lambda ^T 
M(\Delta \A, \B),
\end{equation}
where $\lambda = (\lambda_1,\ldots,\lambda_{\text{\# residuals}})^T$ is a 
vector of Lagrange multipliers.

\begin{definition}
 The vectorization of $\pmat{A}\in \RR[t]^{n\times n}$ of degree at 
most $d$ is defined as 
 \[ 
   \vec(\pmat{A}) =  (\pmat{A}_{1,1,0},\ldots,\pmat{ A}_{1,1,d}, 
\pmat{A}_{2,1,0}, \ldots, \pmat{A}_{2,1,d}, \ldots, 
\pmat{A}_{n,n,0}, \ldots \pmat{ A}_{n,n,d})^T,
\]
that is $\vec(\pmat{A}))$ stacks the entry-wise coefficient vectors of each 
column on top of each other. 
\end{definition}
We will find it convenient to define 
    $x= \begin{pmatrix}
         \vec (\Delta \A) \\
         \vec (\pmat{B})
        \end{pmatrix}$ to be the combined vector of unknowns corresponding to 
$\A$ and $\B$. 
Let $\nabla^2_{xx}L$ denote the Hessian matrix of $L$ with respect to $x$ and 
$J$ be the Jacobian of the residuals of the constraints, i.e. $J = \nabla_x 
M(\Delta \A,\pmat{B})$.  
Necessary optimality conditions at a point $(x^*,\lambda^*)$  \citep{Ber99} are 
that 
\begin{equation} \label{eqn:necessary-conditions} 
\nabla L= 0 \text{ and } \ker(J)^T\nabla_{xx}^2 L \ker(J) \succeq 0.
\end{equation}
Sufficient conditions for optimality at the same point are that
\begin{equation} \label{eqn:sosc}
\nabla L= 0 \text{ and } \ker(J)^T \nabla_{xx}^2 L \ker(J) \succ0.
\end{equation}
These conditions are known as the second-order sufficiency conditions 
\cite{Ber99}.
We note that \eqref{eqn:sosc} implies that minimal solutions are locally 
unique, and will fail to hold if minimal solutions are not locally unique. The 
idea is to show that \eqref{eqn:sosc} holds in the minimal embedding, which 
allows us to construct an algorithm with rapid local convergence.


\subsection{The Jacobian}

\begin{definition}
  The matrix $\psi(\blin)$ is an alternative form of
  $(\Alin+\DeltaAlin) \blin=0$ that satisfies
  $\psi(\blin) \vec(\A+\Delta \A)=0$.  That is, $\psi(\blin)$
  satisfies
\[
\psi(\blin) \cdot \vec (\A + \Delta \A) = 0 \iff
(\Alin + \DeltaAlin) \blin =0.
\]
\end{definition}
 
We will adopt that notation that $\psi(\blin_i)$ corresponds to 
$\psi(\blin_i) \vec(\Alin_i+\Delta \Alin_i) = 0$.
Here we use the bi-linearity of \eqref{eqn:rank-at-most-system} to 
write 
the same system using  a matrix with entries from 
$\Blin$ instead of $\vec(\A+\Delta \A)$. 

The closed-form expression for the Jacobian of the residuals (up to 
permutation) in 
\eqref{eqn:rank-at-most-system} is given by 
\begin{equation}
J  = \left( \begin{array}{c|cccc}
          \psi(\blin_{1}) & \Alin_1+\DeltaAlin_1 & & &  \\
          \psi(\blin_{2}) & & \Alin_2+\DeltaAlin_2  & &  \\
          \vdots  & &  & \ddots &  \\
          \psi(\blin_{r}) & && & \Alin_r+\DeltaAlin_r      \\ \hline
          0& \N(\blin_{1})^T & & &\\
          0&& \N(\blin_{2})^T  && \\
          \vdots&& & \ddots &   \\
          0&& & &  \N(\blin_{r})^T 
         \end{array} \right).
\end{equation}
Unlike the case of a single kernel vector in \citep{GHL17}, $J$ may be rank 
deficient since some 
equations corresponding to low (high) index entries may be redundant 
at the solution. The Lagrange multipliers will not be unique in this particular 
scenario and the rate of convergence may degrade if Newton's method is used. In 
the instance of $r=1$ then we present the following result \citep{GHL17}.

\begin{theorem}\label{thm:jacobian-rank}
  Suppose that $r=1$ and $\blin_1$ is minimally degree $\RR$-embedded in 
$\Alin_1$, then
  $\mat{J}$ has full rank when \eqref{eqn:necessary-conditions} holds.
\end{theorem}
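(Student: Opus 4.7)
My plan is to prove that $J$ has full row rank by ruling out non-trivial left-null vectors. For $r=1$, the Jacobian reduces to
\[
J = \begin{pmatrix} \psi(\blin_1) & \Alin_1 + \DeltaAlin_1 \\ 0 & \N(\blin_1)^T \end{pmatrix},
\]
so a hypothetical left-null vector $(p^T, p_0)$ must satisfy $p^T \psi(\blin_1) = 0$ together with $p^T(\Alin_1 + \DeltaAlin_1) = -p_0 \N(\blin_1)^T$. Right-multiplying the second equation by $\blin_1$ and using $(\Alin_1+\DeltaAlin_1)\blin_1=0$ together with $\N(\blin_1)^T\blin_1=1$ immediately forces $p_0=0$. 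What remains is to show that the block $(\psi(\blin_1),\Alin_1+\DeltaAlin_1)$ itself has full row rank, equivalently that $p^T\psi(\blin_1)=0$ and $p^T(\Alin_1+\DeltaAlin_1)=0$ together force $p=0$.

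By the minimal $\RR$-embedding hypothesis $\ker(\Alin_1+\DeltaAlin_1)=\langle\blin_1\rangle$, so $\Alin_1+\DeltaAlin_1$ has column rank $n\mu-1$ and its left-null space $W=\ker((\Alin_1+\DeltaAlin_1)^T)$ has dimension $nd+1$. The task reduces to showing that the restriction of $\psi(\blin_1)^T$ to $W$ is injective. Interpreting $p\in W$ as a polynomial vector $\pmat{p}(t)\in\RR[t]^n$ of degree at most $\mu+d-1$, membership in $W$ corresponds to $\pmat{p}(t)^T(\A+\Delta\A)(t)$ having a specified block of coefficients vanish. The extra condition $\psi(\blin_1)^T p=0$ is dual to the identity $\psi(\blin_1)\vec(\pmat{M}) = \widehat{\pmat{M}\pmat{b}_1}$: letting $\pmat{M}$ range over the basis monomials $E_{ij}t^k$ with $0\leq k\leq d$ expands it into the requirement that $d+1$ consecutive middle coefficients of each scalar product $p_i(t)\,b_{1,j}(1/t)$ vanish, for every $i,j$.

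The core of the argument, and the main obstacle, is a polynomial argument leveraging the primitivity of $\pmat{b}_1$ afforded by the ``minimal degree'' part of the hypothesis, $\gcd(b_{1,1},\ldots,b_{1,n})=1$. Bezout produces cofactors $q_1,\ldots,q_n\in\RR[t]$ with $\sum_j q_j b_{1,j}=1$; combining the vanishing windows on $p_i(t)\,b_{1,j}(1/t)$ across $j$ via these Bezout cofactors, using the identity $\sum_j q_j(1/t)\cdot[p_i(t)\,b_{1,j}(1/t)] = p_i(t)$, should collapse the per-$j$ zero windows into a direct vanishing window of coefficients of $p_i(t)$ itself. Combined with the constraints from $p\in W$ (which restrict the admissible support of $\pmat{p}$ through the $(\A+\Delta\A)$-pairing) and the a priori degree bound $\deg\pmat{p}\leq\mu+d-1$, this forces $\pmat{p}=0$. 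The necessary conditions $\nabla L=0$ enter by placing us at a stationary point, tying $p$ to the existing Lagrange multiplier $\lambda_1$ (which satisfies the analogous equation $(\Alin_1+\DeltaAlin_1)^T\lambda_1=-\lambda_2\N(\blin_1)$) and guaranteeing that the above polynomial duality is consistent.

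The hardest step will be the degree bookkeeping inside the Bezout manipulation: tracking which coefficients of $p_i(t)\,b_{1,j}(1/t)$ vanish for each $j$, reconciling the varying supports introduced by the differing degrees of the $b_{1,j}$'s (so that the reversed polynomials $t^{\deg b_{1,j}}b_{1,j}(1/t)$ live at different scales), and showing that the Bezout cofactors can be chosen so that the combined vanishing window on $p_i(t)$ is wide enough, when intersected with the $W$-constraints, to force $p_i(t)=0$. This mirrors the Jacobian full-rank argument for singular matrix pencils in \citep{GHL17}, but must be carefully adapted to the matrix-polynomial setting where the kernel entries $b_{1,j}$ generically have differing degrees and the reversal shifts must be tracked per component.
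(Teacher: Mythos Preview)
Your approach diverges sharply from the paper's. The paper's proof is a three-line argument by contradiction: a rank-deficient $J$ would have a row expressible as a combination of the others, meaning one constraint equation is redundant; since the minimal degree $\RR$-embedding is constructed precisely so that redundant equations (and non-regular constraint systems, in the sense of \cite[Section~3.1]{Ber99}) have been eliminated, this is impossible. In other words, the paper essentially folds full row rank into the \emph{construction} of the minimal embedding rather than deriving it from the two stated properties (one-dimensional kernel, primitive $\bb_1$).

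Your attempt is more ambitious: you try to derive full row rank directly from those two algebraic properties via an explicit left-null-vector analysis and a Bezout argument exploiting primitivity. The reduction steps are sound --- eliminating $p_0$ by pairing with $\blin_1$, and reducing to injectivity of $\psi(\blin_1)^T$ on the left-null space $W$ --- but the core step is not actually carried out, and there is a genuine obstacle. The Bezout cofactors $q_j$ satisfying $\sum_j q_j b_{1,j}=1$ can have degree up to roughly $(n-1)\max_j\deg b_{1,j}$, so multiplying the per-$j$ vanishing windows by $q_j(1/t)$ shifts them by different (and potentially large) amounts; there is no a~priori reason the resulting windows overlap enough to kill a full block of coefficients of $p_i$. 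Moreover, the constraints defining $W$ depend on the \emph{specific} $\A+\Delta\A$, not just on $\bb_1$, so your argument would have to succeed uniformly over all admissible perturbations --- something the Bezout manipulation alone does not obviously deliver. Finally, you invoke the hypothesis $\nabla L=0$ only vaguely (``tying $p$ to the existing Lagrange multiplier''), but never actually use it; since $J$ depends only on $(\Delta\A,\bb_1)$ and not on $\lambda$, it is unclear what work that hypothesis would do in your scheme.

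In short: the paper sidesteps all of this by treating ``minimal embedding'' as a promise that the constraint system is already regular, while you are attempting to \emph{prove} that promise from first principles --- a worthwhile goal, but your plan as written has a real gap at the Bezout/degree-bookkeeping step.
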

\begin{proof}

We show that $J$ has full row rank by contradiction.
If this matrix was rank deficient, then one row is a linear 
combination of the others.
This means that one of the equations in the constraints is trivial or the 
solution is not regular (see \cite[Section 3.1]{Ber99}). As we are only 
concerned about regular solutions, this 
contradicts the minimal degree $\RR$-embedding.
\end{proof}
The corollary to this is that in the minimal embedding regularity conditions 
hold and it is straight forward to obtain rapid local convergence.

\subsection{The Hessian}
The Hessian matrix, $\nabla^2L$ is straight forward to compute as 
\[ \nabla^2 L = \begin{pmatrix}
                 \nabla^2_{xx}L & J^T\\
                 J & 0 
                \end{pmatrix}. 
\]
The following theorem shows that second-order sufficiency holds for the 
instance of $r=1$.  The case of $r>1$ follows immediately by induction. This is 
in contrast to Theorem~\ref{thm:jacobian-rank}, which does not always hold for 
$r>1$. 

\begin{theorem}[Second Order Sufficiency Holds]\label{thm:second-order-suff}
 Suppose that $\Alin+\DeltaAlin$  has a minimally degree $\RR$-embedded 
kernel 
vector $\blin$, i.e. $r=1$ in \eqref{eqn:lagrangian}, then at a minimal 
solution, the second order 
sufficiency condition \eqref{eqn:sosc} holds in 
the minimal embedding of $\blin$. 
\end{theorem}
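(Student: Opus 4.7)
\medskip
\noindent\textbf{Proof Plan.} The plan is to verify the second-order sufficiency condition \eqref{eqn:sosc} by analyzing the reduced Hessian $Z^T \nabla^2_{xx}L Z$ on the tangent space $\ker J$, where $Z$ is a basis for $\ker J$. I will exploit the specific block structure of the Hessian arising from the quadratic objective and the bilinear constraints.

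First, I will write down $\nabla^2_{xx}L$ in the partition $x = (\vec(\Delta\A), \vec(\pmat{B}))^T$. The objective $\|\Delta\A\|_F^2$ contributes $\mathrm{diag}(2I, 0)$. The constraint $(\Alin + \DeltaAlin)\blin = 0$ is bilinear in $(\vec(\Delta\A), \blin)$, so its contribution to the Lagrangian Hessian lies entirely in the off-diagonal block, and the linear normalization $\N(\blin)^T \blin = 1$ contributes nothing. Hence
\[
\nabla^2_{xx}L = \begin{pmatrix} 2I & K(\lambda_1) \\ K(\lambda_1)^T & 0 \end{pmatrix},
\]
where $K(\lambda_1)$ depends linearly on the multipliers through $K_{jk}(\lambda_1)=\sum_i \lambda_{1,i}\psi_{ijk}$.

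Second, I will establish injectivity of the projection $\ker J \ni (v_a, v_b) \mapsto v_a$ onto $\Delta\A$-space. If $v_a = 0$ with $(v_a, v_b) \in \ker J$, then the upper block of $Jv = 0$ yields $(\Alin + \DeltaAlin)v_b = 0$, so $v_b \in \ker(\Alin + \DeltaAlin) = \langle\blin\rangle$ by the minimal degree $\RR$-embedding assumption. Writing $v_b = c\blin$ and substituting into $\N(\blin)^T v_b = 0$ together with $\N(\blin)^T\blin = 1$ forces $c = 0$. Therefore every nonzero tangent vector has $v_a \neq 0$, and $v_b = T v_a$ for a uniquely determined linear map $T$.

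Third, I will identify the reduced Hessian with the Hessian of the reduced objective. By Theorem~\ref{thm:jacobian-rank} the Jacobian $J$ has full row rank, so the implicit function theorem locally parametrizes the feasible set, and the injectivity above lets the $\Delta\A$-coordinates serve as free parameters on the tangent space. The reduced objective is then just $\|\Delta\A\|_F^2$ pulled back through this parametrization. At a stationary point, the standard identity of constrained optimization equates the Hessian of the reduced objective with $Z^T \nabla^2_{xx}L Z$. The strict convexity of $\|\Delta\A\|_F^2$ (with Hessian $2I$) combined with non-singularity of the parametrization at the minimal embedding then yields strict positive definiteness.

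The main obstacle is rigorously controlling the cross term $2 v_a^T K(\lambda_1) v_b$, so that the symmetric form $2I + KT + T^T K^T$ obtained after the substitution $v_b = T v_a$ remains positive definite. I expect to leverage the stationarity conditions $\psi(\blin)^T \lambda_1 = -2\vec(\Delta\A)$ and $(\Alin + \DeltaAlin)^T\lambda_1 = -\lambda_0 \N(\blin)$, together with the linking relation $(\Alin + \DeltaAlin) T v_a = -\psi(\blin) v_a$ from the tangent constraint, to cancel or cleanly bound the cross contributions; the bilinear identity $\psi(v_b) v_a = \DeltaAlin(v_a) v_b$ will be useful for rewriting $v_a^T K v_b$. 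The $r > 1$ case then follows by induction using the block-diagonal structure of \eqref{eqn:rank-at-most-system}, since adding further kernel columns augments the constraint system block-diagonally without interfering with the argument applied to each block.
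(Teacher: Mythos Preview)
Your Step~2—the injectivity of the projection $\ker J\to(\Delta\A)$-space, obtained from the minimal degree $\RR$-embedding together with the normalization $\N(\blin)^T\blin=1$—is exactly the core of the paper's argument. It corresponds to the paper's case $\norm{\Delta\A}=0$, where $\nabla^2_{xx}L=\mathcal{H}=\diag(2I,0)$ and one shows $\ker J\cap\ker\mathcal{H}=0$.

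Where you diverge is Step~3, and the obstacle you flag is genuine and not resolved by the plan you sketch. The implication ``strict convexity of $\norm{\Delta\A}_F^2$ plus nonsingular parametrization $\Rightarrow$ positive definite reduced Hessian'' fails as stated: restricting a strictly convex ambient function to the (generally curved) image of the feasible manifold in $\Delta\A$-space picks up a curvature term $2\langle\Delta\A^\star,\gamma''(0)\rangle$, which can be negative when $\Delta\A^\star\neq 0$. Equivalently, after your substitution $v_b=Tv_a$ the cross contribution $2v_a^TK(\lambda_1)v_b$ is not cancelled by the stationarity identities you list; the bilinear rewrite leaves $\lambda_1$ paired against a product depending jointly on $v_a$ and $v_b$, and neither $\psi(\blin)^T\lambda_1=-2\vec(\Delta\A)$ nor $(\Alin+\DeltaAlin)^T\lambda_1=-\lambda_0\N(\blin)$ controls that quantity. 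The paper takes a different and much shorter route for $\norm{\Delta\A}\neq 0$: it writes $\nabla^2_{xx}L=\mathcal{H}+\mathcal{E}$ with $\mathcal{E}=\begin{pmatrix}0&E^T\\E&0\end{pmatrix}$ and checks the kernel inclusion $\ker(\mathcal{H}+\mathcal{E})\subseteq\ker\mathcal{H}$ by a direct linear-algebra computation (from $2v_a+E^Tv_b=0$ and $Ev_a=0$ one gets $EE^Tv_b=0$, hence $E^Tv_b=0$ and $v_a=0$). This reduces the general case to the one already handled by your Step~2, with no need to parametrize the feasible set or estimate cross terms.
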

\begin{proof}
 If $\norm{\Delta A}=0$ at the local minimizer $(x^*,\lambda^*)$ then 
\[\nabla^2_{xx}L(x^\star,\lambda^\star) = 
 \begin{pmatrix}
  2I \\
  & 0
 \end{pmatrix} \text{ and }
K=\ker 
\nabla^2_{xx}L(x^\star,\lambda^\star) = \Span \begin{pmatrix}
                                                   0 \\ 
                                                   & I
                                                  \end{pmatrix}.\] 
                                                  
We have that for  $y\in \Span(K)$ such that $Jy=0$  implies that $\Alin y=0$ 
and $\N(\blin)^T y=0$. It follows that $\ker \Alin = \Span(\blin)$, thus we 
have 
$y=\blin$ or $y=0$  via the 
minimal degree $\RR$-embedding,  thus $y=0$ as $\blin \notin \Span(K)$. Hence, 
second-order  sufficiency holds, as $\ker J \cap K = 0$.
                                                  
If $\norm{\Delta\A}\neq 0$ then we have that 
\[ \nabla^2_{xx}L(x^\star,\lambda^\star) = \underbrace{\begin{pmatrix}
                                            2I&0 \\
                                            0& 0
                                           \end{pmatrix}}_{\mathcal{H}} +        
          \underbrace{\begin{pmatrix}
             0 & E^T \\
             E & 0
            \end{pmatrix}}_{\mathcal{E}}.\]
The matrix $\mathcal{E}$ is linear in $\lambda$, however the precise tensor 
decomposition is irrelevant to the proof. If $E$ has full rank, then 
$\nabla^2_{xx}L$ has full rank and we are done, so suppose  that 
$E$ is rank 
deficient. If $E$ is rank deficient, then one can eliminate a row of $E$ and 
column of $E^T$ without affecting  $\mathcal{H}$ via 
symmetric row and column updates. We observe that 
$\ker(\mathcal{H}+\mathcal{E}) \subseteq \ker \mathcal{H}$ and the result 
follows.
\end{proof}
\begin{corollary}
Suppose that $r>1$ in \eqref{eqn:lagrangian} and $\B$ is minimally degree 
embedded, then second-order sufficiency 
\eqref{eqn:sosc} holds.
\end{corollary}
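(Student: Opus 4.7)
The plan is to reduce the $r > 1$ case to a block-wise application of the $r = 1$ proof by exploiting the natural block structure of \eqref{eqn:rank-at-most-system}. Ordering the variables as $x = (\vec(\Delta \A), \blin_1, \ldots, \blin_r)$, the Jacobian $J$ is block-bordered as displayed in Section~4.3: the only coupling between the $\blin_i$ is through the shared $\vec(\Delta \A)$. Correspondingly, I would write the Hessian as $\nabla^2_{xx} L = \mathcal{H} + \mathcal{E}$, where $\mathcal{H} = \diag(2I, 0, \ldots, 0)$ comes from $\norm{\Delta \A}_F^2$ and $\mathcal{E}$ consists only of cross blocks $E_i^T$, $E_i$ between $\vec(\Delta \A)$ and each $\blin_i$ arising from the bilinearity of the constraint residuals in the Lagrange multipliers $\lambda_i$ (plus harmless diagonal contributions from $\N$ if it is not linear).

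Next, I would take any $y = (y_0, y_1, \ldots, y_r) \in \ker J$ and show $y^T \nabla^2_{xx} L\, y > 0$ unless $y = 0$. The decomposition of $Jy = 0$ gives, for each $i$, the two equations $\psi(\blin_i) y_0 + (\Alin_i + \DeltaAlin_i) y_i = 0$ and $\N(\blin_i)^T y_i = 0$. In the case $\Delta \A = 0$, arguing as in Theorem~\ref{thm:jacobian-rank} block-by-block shows each $\lambda_i$ vanishes, hence $\mathcal{E} = 0$ and $y^T \mathcal{H}\, y = 2\|y_0\|^2$; this being zero forces $y_0 = 0$, and then the minimal degree $\RR$-embedding of each $\blin_i$ together with $\N(\blin_i)^T y_i = 0$ forces $y_i = 0$, exactly as in the $r=1$ proof. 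In the case $\Delta \A \neq 0$, the cross blocks $E_i$ act on disjoint coordinate subsets, so the symmetric row/column reduction used in the preceding theorem can be carried out block-by-block; the inclusion $\ker(\mathcal{H} + \mathcal{E}) \subseteq \ker \mathcal{H}$ still holds, again yielding $y_0 = 0$ and then $y_i = 0$ for each $i$ by the minimal embedding.

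Equivalently one can phrase this as induction on $r$: the base case $r = 1$ is the preceding theorem, and the inductive step appends one $\blin_r$ block that interacts with the previous variables only through $\vec(\Delta \A)$, so it inherits positive definiteness on $\ker J$ by a direct reapplication of the $r=1$ argument to the new block. The main obstacle I anticipate is that, as noted after Theorem~\ref{thm:jacobian-rank}, the Jacobian $J$ may fail to have full row rank for $r > 1$, so Lagrange multipliers need not be unique; however, second-order sufficiency \eqref{eqn:sosc} is a condition on $\ker J$ rather than on $J^T$, so this non-uniqueness is harmless. A secondary subtlety is verifying that possible rank-deficiencies of multiple $E_i$ can be handled simultaneously, but because the $E_i$ act on disjoint $\blin_i$ coordinates this reduces to $r$ independent instances of the reduction already used for $r=1$.
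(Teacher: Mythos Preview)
Your proposal is correct and takes essentially the same approach as the paper: the paper's proof is a one-liner stating that the result ``follows by induction on $r$ since each block is decoupled,'' and your write-up is precisely a fleshed-out version of that induction, making explicit the block structure of $J$ and $\nabla^2_{xx}L$ and replaying the $r=1$ argument on each $\blin_i$-block.
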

\begin{proof}
 The proof is almost the same as Theorem~\ref{thm:second-order-suff} and 
follows by induction on $r$ since each block is  
decoupled.
\end{proof}

We now have all of the ingredients for an iterative method with rapid local 
convergence. 

\subsection{Iterative Post-Refinement}\label{ssec:post-refinement}

Newton's method for equality constrained minimization problems can be 
interpreted as solving the non-linear system of equations $\nabla L=0$. 
Newton's method is based on the iterative update scheme 

\begin{equation} \label{eqn:newton-step}
\begin{pmatrix}
   \mat{x}^{k+1} \\
   \lambda^{k+1}
  \end{pmatrix}
  =
  \begin{pmatrix}
  \mat{x}^{k} +\Delta \mat{x}^k\\
   \lambda^{k} +\Delta {\lambda}^k
  \end{pmatrix}
\text{ such that }
 \nabla^2 L  \begin{pmatrix}
                \Delta \mat{x}\\
                \Delta \lambda                         
		\end{pmatrix}.
\end{equation}
If $r=1$ then $\nabla^2L$ has full rank and the iteration is well defined by
matrix inversion. If $r>1$ then we consider the quasi-Newton method 
defined as
\begin{equation}\label{eqn:reg-system}
\begin{pmatrix}
 x^{k+1} \\
 \lambda^{k+1} \\
\end{pmatrix} = 
\begin{pmatrix}
 x^k + \Delta x^k \\
 \lambda ^k + \Delta \lambda ^k 
\end{pmatrix}
\text{ such that }
\begin{pmatrix}
   \nabla^2_{xx} L & J^T \\
   J & -\mu_k I
  \end{pmatrix}  
  \begin{pmatrix}
   \Delta x \\
   \Delta \lambda
  \end{pmatrix}
  = -\nabla L
\end{equation}
for a suitably chosen parameter $\mu_k$.  Taking $\mu_k = \norm{\nabla 
L{(x^{k},\lambda^k)}}_1$ one has provably quadratic convergence 
\citep[Theorem~4.2]{Wri05} with $x^k$ and $\lambda^k$ chosen 
sufficiently close to the 
optimal solution.

\begin{theorem}
The iteration 
\eqref{eqn:reg-system} converges quadratically  to $(x^\star,\lambda^\star)$ if 
$(x^0,\lambda^0)$ are chosen sufficiently close to $(x^\star,\lambda^\star)$.
\end{theorem}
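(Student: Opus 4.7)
The plan is to verify that the hypotheses of the cited quasi-Newton convergence result \citep[Theorem~4.2]{Wri05} are met, so the theorem can simply be invoked. Three ingredients are required: (i) the objective and constraints are $C^2$ with locally Lipschitz second derivatives near $(x^\star,\lambda^\star)$; (ii) the second-order sufficiency condition \eqref{eqn:sosc} holds at $(x^\star,\lambda^\star)$; and (iii) the regularization parameter $\mu_k$ is driven to zero at a rate compatible with quadratic convergence, i.e.\ $\mu_k = O(\|(x^k,\lambda^k)-(x^\star,\lambda^\star)\|)$.

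First I would dispose of (i). The objective $\norm{\Delta\A}_F^2$ is a quadratic polynomial in the entries of $\Delta\A$, and the constraint residuals in \eqref{eqn:rank-at-most-system} are bilinear in $\Delta\A$ and $\B$ (plus the linear normalization rows $\N(\blin_i)^T\blin_i -1$). Hence the Lagrangian $L$ of \eqref{eqn:lagrangian} is polynomial in $(x,\lambda)$, so $\nabla^2 L$ and $J$ are polynomial-valued and in particular locally Lipschitz continuous, with Lipschitz constants uniform on any bounded neighbourhood of the minimizer.

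For (ii), second-order sufficiency is exactly the content of Theorem~\ref{thm:second-order-suff} and its corollary once the columns of $\B$ are in the minimal degree $\RR$-embedding. For $r=1$, Theorem~\ref{thm:jacobian-rank} additionally gives that $J$ has full row rank at $(x^\star,\lambda^\star)$, so the Lagrange multiplier $\lambda^\star$ is unique and $\nabla^2 L$ is invertible at $(x^\star,\lambda^\star)$ by a standard bordered-matrix argument; in this case the pure Newton iteration \eqref{eqn:newton-step} is well posed in a neighbourhood of $(x^\star,\lambda^\star)$ and classical Newton theory gives local quadratic convergence, with the regularization $\mu_k$ playing no essential role. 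For $r>1$, $J$ may be row-rank deficient and the multiplier set need not be a singleton, which is precisely the scenario addressed by Wright's regularization: the perturbation $-\mu_k I$ in the bottom-right block of the KKT system makes \eqref{eqn:reg-system} solvable even when $J$ loses rank, and the solution to \eqref{eqn:reg-system} agrees with the (minimum-norm) Newton step up to an $O(\mu_k)$ correction.

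For (iii), the choice $\mu_k=\norm{\nabla L(x^k,\lambda^k)}_1$ vanishes at least linearly in the iterate error because $\nabla L$ is smooth and vanishes at $(x^\star,\lambda^\star)$, giving $\mu_k = O(\|(x^k,\lambda^k)-(x^\star,\lambda^\star)\|)$; this is exactly the rate Wright requires to preserve the quadratic contraction in the regularized iteration. Combining (i)--(iii) and shrinking the neighbourhood of $(x^\star,\lambda^\star)$ if necessary so that both the Lipschitz estimates and the second-order sufficiency margin are uniform, Theorem~4.2 of \citep{Wri05} yields the claimed local quadratic convergence. The main obstacle I anticipate is the $r>1$ case: one must check carefully that, even though the multiplier at $(x^\star,\lambda^\star)$ may not be unique, the primal iterate $x^k$ still converges quadratically to the isolated minimizer $x^\star$ — this is where the isolation Corollary following Theorem~\ref{thm:strong-sep} is used to rule out drift within an equivalence class, and where the precise rate at which $\mu_k$ decays (driven by $\|\nabla L\|_1$, not a fixed schedule) must be matched against the error bound to avoid losing an order in the contraction estimate.
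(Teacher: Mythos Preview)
Your proposal is correct and follows the same approach as the paper: the paper does not give an explicit proof of this theorem but simply invokes \citep[Theorem~4.2]{Wri05} after having established second-order sufficiency (Theorem~\ref{thm:second-order-suff} and its corollary) and noted the Lipschitz continuity of the objective and constraints, together with the choice $\mu_k=\norm{\nabla L(x^k,\lambda^k)}_1$. Your write-up is in fact more detailed than the paper's, spelling out explicitly why each hypothesis of Wright's result is satisfied, but the underlying argument is identical.
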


We now have a method to compute a nearby rank deficient matrix polynomial with 
a rate of convergence that is quadratic, provided that the initial values of 
$x$ are chosen to be sufficiently close to the optimal solution.

\section{Implementation, Description and Comparison}
\label{sec:implementation}

In this section we discuss implementation details and demonstrate our
implementation for computing the nearest rank deficient matrix
polynomial.  All algorithms are implemented in Maple 2016. %
All experiments are
done using quad precision floating point arithmetic, with about $35$
decimal digits of accuracy. We compare some degree one examples to the recent 
results of \citep{GulLubMeh16}.

To compute an approximate kernel vector, 
first we use the SVD to compute an approximate kernel of an $\RR$-embedded 
(nearly) rank 
deficient matrix polynomial. 
Next we use  
structured orthogonal elimination $RQ$  ($LQ$) decomposition to produce a 
minimally (degree) $\RR$-embedded vector from the kernel. In the case of 
several kernel vectors we use a modified Gaussian elimination on an 
embedding of an approximate kernel obtained by the SVD and approximate GCD to 
find nearby approximate kernel vectors that are primitive.

\subsection{Description of Algorithm}
We now formally describe an algorithm for computing the nearest matrix 
polynomial of a prescribed rank. The algorithm has no global convergence 
guarantees, however 
a globally convergent (although not necessarily optimal) algorithm can be 
developed in a straight forward manner via augmenting our second order 
algorithm with a first order one, and removing content from kernel vectors if 
necessary.

\begin{algorithm}[!h] \label{alg:newton-kernel}
  \caption{\textbf{:} \texttt{Iterative Kernel Post-Refinement}}
  \label{alg:refinement}
 \begin{algorithmic}[1]
   \smallskip 
   \Require
 \item [$\bullet$] Full rank matrix polynomial $\pmat{A} \in \RR[t]^{n\times n}$
 \item [$\bullet$] (Approximately) Rank deficient matrix polynomial $\pmat{C} 
\in 
\RR[t]^{n]\times n}$
 \item [$\bullet$] Approximate kernel vectors $\pmat{c}_1,\ldots, \pmat{c}_r 
\in \RR[t]^{n\times 1}$ 
of the desired degree/displacement structure
 \item [$\bullet$] Displacement structure matrix $\Delta \pmat{A}$ to optimize 
over
   \smallskip \Ensure
 \item[$\bullet$] Singular matrix $\pmat{A}+\Delta \pmat{A}$ with $\B \subset 
\ker (\pmat{A}+\Delta \pmat{A})$ or an indication of failure.
 
   \medskip \State $\RR$-Embed $\pmat{A}, \pmat{C}, \pmat{c}_1,\ldots, 
\pmat{c}_r$ and $\Delta \pmat{A}$.
\State Compute Lagrangian $L$ from Section~\ref{ssec:optimality}.
\State Initialize $\lambda$ via linear least squares  from 
$\nabla L|_{\mat{x}}=0$.
\State Compute $\begin{pmatrix}
                 \mat{x}+\Delta \mat{x} \\
                 \mat{\lambda}+\Delta \mat{\lambda}
                \end{pmatrix}$ by solving \eqref{eqn:reg-system} until 
$\tallnorm{\begin{pmatrix}
        \Delta \mat{x}\\
        \Delta \mat{\lambda}
       \end{pmatrix}
}_2$ is sufficiently small or divergence is detected.

\State Return the locally optimal $\Delta \pmat{A}$ and $\B$ or an 
indication of failure.
 \end{algorithmic}
\end{algorithm}

The size 
of $\nabla^2 L$ is $O(r^2n^4 d^2)$ and accordingly each iteration has a cost of 
$O(r^6n^{12}d^6)$ flops using standard matrix multiplication, where $r$ is the 
dimension of the kernel.

\subsection{Nearest Rank Deficient Linearly and Affinely Structured Matrix}
In this section we consider Examples 2.10, 2.11 and 2.12 from 
\cite{GulLubMeh16}, where we compare our results to real 
perturbations. Note 
that complex perturbations are a straight-forward generalization of the theory 
presented here, and can be re-formulated as a problem over $\RR$. 

The technique of \cite{GulLubMeh16} poses computing a nearby rank-deficient 
linear matrix pencil by verifying that sufficiently many images of the matrix 
polynomial are singular, so that $\det(\A+\Delta \A)\equiv0$. The problem is 
then 
posed as a solution to a system of Ordinary Differential Equations (ODE), 
assuming 
that certain genericity conditions on the eigenvalues of the solution 
hold\footnote{Our algorithm and convergence theory does not explicitly rely on 
genericity 
assumptions or other properties of eigenvalues, however we do exploit 
generic properties 
in formulating initial guesses.}.  They consider the instances of computing 
$A_0$ and $A_1$ with a common kernel vector, and the instance where $A_0$ and 
$A_1$ do not have a common kernel. Additionally, perturbations affecting only 
one of $A_0$ and $A_1$ are considered.
We note that the solutions to 
the ODEs do not necessarily satisfy necessary optimality conditions 
\eqref{eqn:necessary-conditions}, and accordingly will generally not be local 
minimizers.

 \subsubsection{Nearest Affinely Structured Examples I}
 Consider first the matrix polynomial 
\[\A =  \underbrace{\begin{pmatrix} 0&0&0\\ 
 0&0&1\\ 
 0&1&0\end{pmatrix}}_{A_1} t  + 
\underbrace{ \begin{pmatrix} 0& 0.0400& 0.8900\\   
0.1500&- 0.0200&0\\   0.9200& 0.1100& 0.06600\end{pmatrix}}_{A_0}  
 \]
coming from Examples 2.10 and 2.12 of \cite{GulLubMeh16} 

\begin{example}
 If we assume that $A_1$ is constant, then this is finding the (locally) 
nearest matrix polynomial with an affine structure since $A_1$ has non-zero 
fixed constants. First let's assume that zero entries are preserved, this is a 
linear structure on $A_0$.

To compute an initial guess for $\b$ we use the SVD on $\Alin$ and extract a 
guess from the smallest singular vector. This gives us 
\[\b_{init} =  \begin{pmatrix} - 0.41067 {t}^{3}+ 
0.50576 {t}^{2}- 0.26916 t- 0.035720\\    
0.38025 {t}^{2}- 0.51139 t+0.30674\\    0.027012 {t}^{2}-
 0.028083 t+ 0.010715\end{pmatrix}.\]
For an initial guess on $\A$ we take $\A_{init}=\A$.  Note that we do not need 
an initial guess that is singular, it just needs to be ``sufficiently close'' 
to a singular matrix polynomial.

If we do not allow perturbations to zero-coefficients, that is, $A_0[1,1]$ and 
$A_0[2,3]$ may not be perturbed, then 
after five iterations of plain Newton's method (see \citep{GHL17}) we compute
\[ \Delta A_0 \approx \begin{pmatrix}  0.0&- 0.094149&- 
0.0057655\\  - 0.093311& 0.026883& 0.0\\   
0.0057142&- 0.0016462&- 0.00010081\end{pmatrix}\] with perturbation 
 $\norm{\Delta\A}_F \approx 0.135507$.
 
 A corresponding (approximate) kernel vector is 
 \[ \b \approx \begin{pmatrix}  0.73073 t+ 0.082126\\ 
 - 0.67644\\  - 0.041424\end{pmatrix}.
\]
\end{example}
\begin{example}
If we allow perturbations to zero-coefficients in $A_0$ then after five rounds of 
plain Newton's method we compute 
\[ \Delta A_0 \approx 
  \begin{pmatrix} 0.0&- 0.094179&- 0.0057705\\ 
 - 0.093280& 0.026786& 0.0016412\\   
0.0057154&- 0.0016412&- 0.00010056 
\end{pmatrix} \] with perturbation  $\norm{\Delta \A}_F 
\approx 0.135497$,
which is a marginal improvement over the previous example. A corresponding 
approximate kernel vector is 
\[ \b \approx \begin{pmatrix}  0.73073 t+ 0.082131\\ 
 - 0.67644\\  - 0.041447\end{pmatrix}.
\]
\end{example}

\cite{GulLubMeh16} report an upper-bound on the distance to 
singularity allowing {\emph{complex perturbations}}, that is $\Delta \A \in 
\CC[t]^{n\times n}$ of $\norm{\Delta^{\CC} \A}_F \approx 0.1357$ in Example 
2.10. In Example 2.12, \cite{GulLubMeh16} report an upper-bound on the distance 
to singularity
 allowing \emph{real perturbations},  $\norm{\Delta^{\RR}\A}_F\approx 
0.1366$.
Although we only consider real perturbations, both bounds are improved. We 
conjecture that the complex  bound can be improved further.


If we allow  perturbations to $A_0$ and $A_1$, then this is some form of 
finding the nearest 
rank deficient matrix polynomial. The question is whether to allow degree or 
support preserving perturbations. Again, we will use the same initial guesses 
as the previous example.

Matrix degree preserving perturbations are of the form \[\Delta^{deg} \A 
=\begin{pmatrix} 
tA_{{1,1,1}}+A_{{1,1,0}}&tA_{{1,2,1}}+A_{{1,2,0}}&tA_{{1,3,1}}+A_{{1,3,0}}\\ 
 tA_{{2,1,1}}+A_{{2,1,0}}&tA_{{2,2,1}}+A_{{2,2,0}}&tA_{{2,3,1}}
 +A_{{2,3,0}}
\\ 
 tA_{{3,1,1}}+A_{{3,1,0}}&tA_{{3,2,1}}+A_{{3,2,0}}&tA_{{3,3,1}}
+A_{{3,3,0}}\end{pmatrix},
 \]
 where  as support preserving perturbations are of the form 
 \[ \Delta^{sup}\A = \begin{pmatrix} 0&A_{{1,2,0}}&A_{{1,3,0}}\\ 
 A_{{2,1,0}}&A_{{2,2,0}}&A_{{2,3,1}}t\\ 
 A_{{3,1,0}}&tA_{{3,2,1}}+A_{{3,2,0}}&A_{{3,3,0}} 
\end{pmatrix}.
 \]
 
 \begin{example}\label{ex:arbitrary-degree}
 In the instance of degree preserving perturbations we compute after five 
iterations of Newton's method
 \[ \Delta^{deg}\A \approx \begin{pmatrix}  0.0036502& 0.0039174 t- 
0.066405& 0.00011839 t- 0.0020069\\ 
 - 0.066897& 0.058993 t+
 0.029807& 0.0017829 t+ 0.00090082\\   0.0059893&- 
0.0053098 t- 0.0024133&- 0.00016047 t- 0.000072934\end{pmatrix} 
 \]
 with $\norm{\Delta^{deg}\A} \approx 0.115585$.
 
 A corresponding approximate kernel vector is 
 \[\b \approx  \begin{pmatrix} - 0.72941 t- 0.080355\\ 
  0.67903\\   0.020522\end{pmatrix}   .
\]
 \end{example}
 \begin{example}
 In the instance of support preserving we compute after five iterations of 
Newton's method, 
 \[ \Delta^{sup}\A \approx   \begin{pmatrix}  0.0&- 0.094311&- 
0.0057928\\  - 0.092552& 0.026973& 0.0051028 t\\ 
  0.0057434&- 0.0051554 t- 0.0016739&- 0.00010281 
\end{pmatrix} \]  with $\norm{\Delta ^{sup} \A} \approx 0.135313.$
 A corresponding approximate kernel vector is 
 \[ \b \approx  \begin{pmatrix} - 0.72895 t- 0.082339\\ 
  0.67832\\   0.041664\end{pmatrix}.
\]
\end{example}

\cite{GulLubMeh16} report an upper-bound on the distance to 
singularity of $\norm{\Delta^{deg}\A}_F\approx 0.1193$ in Example 
2.12. This bound is larger than  the one computed in 
Example~\ref{ex:arbitrary-degree}.

\subsection{Nearest Affinely Structured Examples II}

 \begin{example}
 Next we consider the the matrix polynomial $\A$ in Example 2.11 of
\citep{GulLubMeh16} defined as
\[\A = \underbrace{\begin{pmatrix} - 1.79& 0.10&- 0.6\\ 
  0.84&- 0.54& 0.49\\  - 0.89& 0.3& 0.74\end{pmatrix}}_{A_0} + 
 \underbrace{\begin{pmatrix} 0&0&0\\  0&0&1\\ 
 0&1&0\end{pmatrix}}_{A_1}t.
 \]
 To compute an initial guess for we take $\A_{init}=A$ and take \[\b^{init} =  
 \begin{pmatrix} - 0.16001 {t}^{3}- 
0.10520 {t}^{2}+ 0.15811 t+ 0.11409
\\  0.14980 {t}^{3}- 0.51289 {t}^{2}- 0.18616 t+ 0.54098
\\ 0.20801 {t}^{3}+ 0.26337 {t}^{2}- 0.44619 t- 
0.027979\end{pmatrix}. \]
$\b^{init}$ is computed from the smallest singular vector of $\Alin$.

We note that this initial guess does not attempt to find a nearby singular 
matrix polynomial for the initial guess, all that is needed is $\nabla 
L(\mat{x}^{init},\lambda^{init})$  is reasonably small to obtain convergence.

Using a 
globalized variant of Newton's method based on Levenberg-Marquardt we 
compute 
 \[\Delta \A= \begin{pmatrix}  0.047498 t+ 0.17772& 0.44989 t+ 
0.12420&- 0.091945 t- 0.068210\\   0.20979 t+ 0.078872&- 
0.094205 t+ 0.41583&- 0.037916 t- 0.094081
\\   0.082862 t- 0.15413&- 0.58334 t+ 0.12940& 0.081637 t+ 
0.017208\end{pmatrix}, \]
 $ \text{ with } \norm{\Delta \A}_F \approx  
0.949578.$ The corresponding approximate kernel vector is 
 \[\b=   \begin{pmatrix} - 0.29258 t- 0.21491\\   
0.044825 t- 0.90281\\   0.068189 t+ 0.21562  
\end{pmatrix}.
\]

If we use the result of \citep{GulLubMeh16} as the initial guess,
then we compute 
\[\b^{init} = 
\begin{pmatrix}  0.16409 {t}^{2}+ 0.25146 t+ 
0.12362\\ 
-4.5353\times 10^{-14} {t}^{2}+ 0.23740 t+ 0.55516\\
 1.2457\times 10^{-13} {t}^{2}- 0.48688 t- 0.0060443\end{pmatrix}.
 \]
 We will assume the entries of $\b$ are degree at most two. 
 
 After five iterations of Newton's method we obtain 
\[ \Delta \A =  \begin{pmatrix}  0.17257& 
0.12237 t+ 0.25225&- 0.46902 t+ 0.087147\\   0.21449& 
0.15210 t+ 0.31353&- 0.58296
 t+ 0.10832\\  - 0.055963&- 
0.039685 t- 0.081803& 0.15210 t- 0.028261\end{pmatrix},
 \]
with $\norm{\Delta \A} \approx 0.94356416.$

The corresponding approximate kernel vector is 
\[ \b = \begin{pmatrix}  0.18971 {t}^{2}+ 0.29750 t+ 0.14667\\ 
  0.27896 t+ 0.66186\\  - 0.58143 t- 
0.0079694\end{pmatrix} .
\]
The previously noted small quadratic terms were at roughly machine precision 
(the computation is done with 35 digits of precision) and truncated.
\end{example}
\cite{GulLubMeh16} obtain a result on this past example that  produces an 
upper  bound on the distance to singularity of $0.9438619$. Their computation 
is  accurate to seven decimal points, and accordingly our  post-refinement has 
an  improvement of about $0.000297$. This is not surprising, since we solve the 
necessary conditions \eqref{eqn:necessary-conditions} directly with a 
reasonable initial guess.

\subsection{Lower Rank Approximation of a $4\times 4$ Matrix}
In this following example we consider computing a lower-rank approximation to a 
given matrix polynomial. Consider the $4\times 4$ matrix  polynomial $\A$,
defined as \[\A = A_0 + A_1 t + A_2 t^2+A_3t^3, \text{ where }\]

\begin{align*}
A_0 & =     \begin{pmatrix}  0.09108776&- 0.05442464& 0.3645006& 
0.01821543\\    -
 0.1456436& 0.03647524&- 0.07277662& 0.07305016\\     
0.05478714&- 0.05444916&
 0.4373220& 0.05478385\\    - 0.1274211& 0.09124859&- 0.6556615&- 
0.05446850
\end{pmatrix},
\\
A_1 & =  \begin{pmatrix}  0.09116729& 0.00001797690& 0.2550857& 
0.05475106
\\     0.0001156514& 0.00001659159& 0.09108906&- 0.05447104\\
 0.05470823& 0.03662426& 0.1276959& 0.03650378\\     0.05472202&- 
0.1091389&
 0.1458359&- 0.09090507\end{pmatrix},
\\
A_2 & =    \begin{pmatrix}  0.01833149& 0.03661770& 0.01824331& 
0.03660918\\    
 0.01837542&- 0.05442525& 0.0& 0.01832234\\     0.01841784& 
0.00003900436& 0.0&
 0.01836515\\     0.01840752& 0.00001508311& 0.01839699& 
0.03659170 
\end{pmatrix} ,
\\
A_3 & =    \begin{pmatrix}  0.0& 0.01837967& 0.0& 0.0\\ 
    0.0& 0.01843603& 0.0&
 0.0\\     0.0& 0.01829203& 0.0& 0.0\\     0.0& 
0.01842778& 0.0& 0.0
\end{pmatrix}.
\end{align*}

\begin{example}
We will consider a displacement structure on the kernel as well in this 
example, where higher-order zero terms are not perturbed from the initial 
guess.
For the entries of $\Delta \A$ we preserve higher-order zero terms, and 
allow low-order terms to be perturbed. This is a linearly structured problem, 
on both the main variable $\Delta \A$ and the auxiliary kernel variable $\B$. 

To ensure the rank constraint holds, we will additionally assume that the 
kernel, $\Blin$ 
is in a CREF (while $\B$ is obviously not) and the columns have unit norm.  
This normalization is (locally) equivalent to the ones discussed in 
Section~\ref{ssec:optimality}. Having $\Blin$ in a CREF ensures that the two 
kernel vectors are locally linearly independent during the iteration. Of course 
perturbing both pivots to zero is possible (although this is sub-optimal). In  
 such a scenario linear independence can no longer be guaranteed, and the 
iteration would need to be re-ininitialized. 

For the initial guess we use $\A^{init}=\A$ and take $\B^{init}$ as 
\[ \scalemath{.8}{   \begin {pmatrix} 0.1954059 {t}^{2}& 0.0\\ 
 - 0.2526800 t- 0.7681472&- 0.06131396 {t}^{2}-
 0.1839419 t+ 0.7357675\\  - 0.05727413 {t}^{2}- 
0.01010720 t- 0.1280246&- 0.06131396 {t}^{3}- 0.06131396 t+ 0.1226279
\\   0.05727413 {t}^{2}+ 0.4683004 t+ 0.2560491& 
0.06131396 {t}^{3}+ 0.4905117 {t}^{2}- 0.3065698 t- 0.2452558  
\end{pmatrix}}.  \]

Using Algorithm~\ref{alg:newton-kernel} we  compute after nine iterations  
\begin{align*}
 \Delta A_0 & =  \begin{pmatrix}  0.00003841866&- 0.0001970606&- 
0.00002444167&- 0.000003273264
\\     0.00001831140&- 0.00009026377& 0.00002067189&- 
0.0001255102
\\    - 0.0001265513&- 0.0001595407& 0.00003425737&- 
0.00007523197
\\    - 0.00007666528&- 0.0002773970& 0.00004057408&- 
0.0001720881 
\end{pmatrix} , \\
 \Delta A_1 & =  \begin{pmatrix}  0.00001508776& 0.00003166597& 
0.00004647888&- 0.0001142308
\\    - 0.00005872595&- 0.00004487730& 0.00004547421&- 
0.0001483973
\\     0.00002056901&- 0.0001596527&- 0.000006413632&- 
0.00006541721
\\    - 0.00003695701&- 0.0001773889& 0.00004119722&- 
0.0002159825 
\end{pmatrix} ,
\\
 \Delta A_2 & = \begin{pmatrix} - 0.00003352295&- 0.0001190577& 
0.00005687700&- 0.0001783770
\\     0.00001768442&- 0.0001467423& 0.0&- 0.00008587235\\ 
   -0.00006506345& 0.00005243135& 0.0&- 0.0001686619\\    - 
0.0001471227&-
 0.0001295490&- 0.00001105246&- 0.0001124559\end{pmatrix},
 \\
 \Delta A_3 & =  \begin{pmatrix}  0.0&- 0.0001025690& 0.0& 0.0\\ 
    0.0&- 0.0001315095
& 0.0& 0.0\\     0.0&- 0.00002763942& 0.0& 0.0\\ 
    0.0&-
 0.0001877673& 0.0& 0.0\end{pmatrix},
\end{align*}
with $\norm{\Delta \A} \approx 0.0007844$. 

An approximate kernel, $\B$ is given by 
\[ \scalemath{.8}{    \begin{pmatrix} 0.1955493 {t}^{2}+ 
0.0006874986 t- 0.001013023& 0.0\\  - 0.2542383 t- 
0.7686061&- 0.06128819 {t}^{2}- 0.1818298 t+ 0.7368313
\\  - 0.05698735 {t}^{2}- 0.01004111 t- 0.1276311&- 
0.06125293 {t}^{3}- 0.0002486115 {t}^{2}- 0.06112324 t+ 0.1226783\\ 
  0.05795811 {t}^{2}+
 0.4677475 t+ 0.2541290& 0.06151690 {t}^{3}+ 0.4894569 {t}^{2}- 0.3069667 t- 
0.2452396\end{pmatrix}}. \]
\end{example}

A natural question is what happens if we change the displacement structure on 
the kernel? To investigate this behavior, we consider an equivalent 
representation of the previously used kernel, except that $\B$ is in a CREF 
directly.
\begin{example}
If we change the kernel $\B^{init}$  to be 
\[\scalemath{.8}{   \begin{pmatrix} 0.1581139 {t}^{3}+ 
0.1581139 t- 0.3162278& 0.03965258 {t}^{3}+ 0.3172206 {t}^{2}- 0.1982629 t- 
0.1586103\\  - 0.1581139 {t}^{2}-
 0.4743417 t- 0.6324556&- 0.03965258 {t}^{2}- 0.4361784 t- 0.7930516\\ 
  0.0& 0.07930516 t- 0.07930516\\   
0.3162278 t- 0.3162278& 0.0 
\end{pmatrix}},\] used in the initialization of the previous example, 
 then we compute a perturbation with
$\norm{\Delta \A} \approx 0.0008408$.  
\end{example}

In either case, we obtain comparable answers that are a reasonable lower-rank 
approximation, and can likely be improved by relaxing restrictions on the 
displacement structure on $\B$ or $\Blin$. It is important to note that 
relaxing the degree bounds to be $(n-r)d$ in general on all non-zero entries 
(where entries are zero if they are in the same row as a CREF pivot) will 
likely lead to a better approximation, however one may lose quadratic 
convergence if doing so, since iterates may no  longer have primitive kernel 
vectors, and \eqref{eqn:sosc} will no longer hold.  As discussed in 
Section~\ref{sec:algorithm}, it is generally difficult to determine the CREF 
pivots of the kernel unless the initial guess is very accurate.

The structure of the kernel is an important consideration when 
deciding upon an initial guess. It is preferable to restrict fewer 
coefficients, however the iteration requires a better initialization due to the 
increased number of possible descent directions. In such scenarios for maximum 
flexibility, a globalized variant of Newton's method is required.  Like-wise, 
the  structure for $\Delta \A$ is also an important choice. 
Restricting which terms can be changed has a large influence on the 
(approximate) distance to singularity (of prescribed kernel dimension).

Another way to approach the lower-rank 
approximation problem is to use alternating projections or alternating 
directions of descent (since the objective is bi-linear with bi-linear 
constraints, it is convex in each argument) on the rank 
factorization in Section~\ref{sec:rankFact}. Since solutions in one coordinate, 
$\Delta \A$ are isolated, one can expect linear convergence with a reasonable 
algorithm. The lack-of normalization required overcomes the difficulty of 
choosing a suitable kernel displacement structure, however convergence would be 
linear at best and determining the dimensions of $U$ and $V$ is another problem 
to be discussed.   It is also worth noting that 
Algorithm~\ref{alg:newton-kernel} requires more computational resources per 
iteration as 
$r$ increases, however a rank factorization requires fewer computational 
resources per iteration as $r$ increases. 
 

\section{Conclusions and Future Work}

We have shown that finding lower-rank approximations of matrix polynomials can 
be established as a numerically well-posed problem and is amenable to first and 
second order optimization methods. The existence and isolation of solutions is 
established along with an algorithm exploiting affine structures to obtain 
locally quadratic convergence under mild normalization assumptions.  

Along with considering the lower-rank approximation of matrix polynomials, we 
present a generalization of the theory to matrix polynomials with an arbitrary 
affine structure. We provide examples of how the  
structure of permissible perturbations and prescribed kernel structure impacts 
the distance to solutions.

We also regard this current paper as a first step towards a formally
robust approach to non-linear matrix polynomials, in the spirit of
recent work with symbolic-numeric algorithms for polynomials.
Problems such as approximate matrix polynomial division, GCRD and
factorization all have applications which can benefit from these
modern tools.







\section*{References}
\bibliographystyle{elsarticle-harv.bst}

\end{document}